\newtheorem{thm}{Theorem}
\newtheorem{prop}[thm]{\bf Proposition}
\newtheorem{defi}{Definition}
\newtheorem{rem}{Remark}
\def\Eb{{\mathbb E}}
\def\Rb{\mbox{$\mathbb R$}}
\title{Asset and Factor Risk Budgeting: A Balanced Approach}
\author{Adil Rengim CETINGOZ\thanks{Université Paris 1 Panthéon-Sorbonne, Centre d’Economie de la Sorbonne, 106 Boulevard de l’Hôpital, 75642 Paris Cedex 13, France, adil-rengim.cetingoz@etu.univ-paris1.fr.\textit{ Corresponding author.}} \quad Olivier GU\'EANT\thanks{Université Paris 1 Panthéon-Sorbonne, Centre d’Economie de la Sorbonne, 106 Boulevard de l’Hôpital, 75642 Paris Cedex 13, France, olivier.gueant@univ-paris1.fr.}}      
\date{}
\begin{document}

\maketitle
\begin{abstract}
Portfolio optimization methods have evolved significantly since Markowitz introduced the mean-variance framework in 1952. While the theoretical appeal of this approach is undeniable, its practical implementation poses important challenges, primarily revolving around the intricate task of estimating expected returns. As a result, practitioners and scholars have explored alternative methods that prioritize risk management and diversification. One such approach is Risk Budgeting, where portfolio risk is allocated among assets according to predefined risk budgets. The effectiveness of Risk Budgeting in achieving true diversification can, however, be questioned, given that asset returns are often influenced by a small number of risk factors. From this perspective, one question arises: is it possible to allocate risk at the factor level using the Risk Budgeting approach? First, we introduce a comprehensive framework to address this question by introducing risk measures directly associated with risk factor exposures and demonstrating the desirable mathematical properties of these risk measures, making them suitable for optimization. Then, we propose a novel framework to find portfolios that effectively balance the risk contributions from both assets and factors. Leveraging standard stochastic algorithms, our framework enables the use of a wide range of risk measures to construct diversified portfolios.

\end{abstract}
\vfill
\begin{flushleft}
\textbf{Keywords:} diversification, factor models, factor investing, portfolio optimization, risk budgeting, risk
measures, stochastic algorithms
\end{flushleft}
\newpage
\section{Introduction}
In Markowitz' seminal 1952 paper~\cite{markowitz1952}, portfolio selection is described as a two-stage procedure. The first stage involves forming beliefs about the future evolution of asset returns, and the second stage concerns portfolio construction based on these beliefs. Modern portfolio theory provides fundamental results for the second stage, by mathematically formalizing the benefits of the ubiquitous concept of diversification and proposing a clear investment objective for investors, based on mean-variance analysis. However, the complexity of the first stage often leads to erroneous beliefs on expected returns that Markowitz' portfolios excessively reflect in asset weightings (see~\cite{best1991sensitivity} and \cite{michaud1989markowitz}). 

\medskip
Two main lines of research have therefore coexisted to improve the portfolio selection process in the academic literature and in the asset management industry.

\medskip
On the one hand, a considerable amount of effort has been devoted to improving the second stage, by reducing the sensitivity to input parameters and obtaining practically more plausible portfolios. Some methods have attempted to retain expected returns within the picture, such as the Black-Litterman model \cite{black1992global} and entropy pooling \cite{meucci2010fully}, while other methods omit them fully (or partially) and only involve the covariance matrix in the portfolio selection process -- so-called risk-based methods. Among risk-based methods, there are several fundamental approaches based on different rationales.\footnote{Interestingly, most of them boil down mathematically to minimizing the portfolio risk under a specific constraint with different parameters (see \cite{gava2022properties} and \cite{richard2015smart}).} Minimum Variance (MV) portfolio is one of them, corresponding to the portfolio with the minimum risk (as measured by variance) among all the possible fully invested portfolios. Another approach is that of the Most-Diversified portfolio, as defined in \cite{choueifaty2008toward}, which seeks to maximize the ratio between the weighted sum of the volatilities of individual assets and the total volatility of the portfolio. Equal Risk Contribution (or Risk Parity) is another risk-based approach that differs from the previous two in motivation. Rather than focusing on optimizing a final statistic of the portfolio, it aims to allocate portfolio risk equally across its assets. Risk Budgeting (RB) is a generalization of this approach, which allows the user to choose the target risk contribution of each asset by setting asset-specific risk budgets. Risk Budgeting has become a popular method, particularly over the last two decades, thanks to some interesting theoretical properties (see \cite{bruder2016risk}, \cite{bruder2012managing} and \cite{roncalli2013introduction}). 

\medskip
On the other hand, with regard to the first stage of portfolio selection, the quest for a better understanding of the true drivers of asset returns and the risk-return relationship dates back to the 1960s, when the Capital Asset Pricing Model (CAPM) was independently developed by Sharpe, Lintner, Mossin and Treynor (see \cite{lintner1965security, mossin1966equilibrium, sharpe1964capital, Treynor1962JackT}). The CAPM establishes a linear relationship between the expected excess returns of risky assets and the expected excess return of the market portfolio, resulting in a single-factor model for asset returns. The arbitrage pricing theory (APT) developed in \cite{ross1976arbitrage} extends this original approach to a multi-factor framework, without identifying the risk factors. The Fama-French three-factor model proposed in \cite{fama1992cross} is one of the most famous factor models, expanding the CAPM by introducing size and value factors on top of the market factor. There are other prominent models that later introduced new factors like momentum, profitability, quality, investments, etc. (see for instance \cite{carhart1997persistence} and \cite{fama2015five}). Over the years, the number of factors proposed by academics and practitioners has grown considerably, giving rise to the phenomenon of the \textit{factor zoo}, as stated by \cite{cochrane2011presidential}. However, a recent study argues that most of these factors are redundant (see~\cite{swade2023factor}).  

\medskip
In the context of a factor model, exploring and controlling exposure to underlying factors become crucial. For example,  \cite{greenberg2016factors} explores portfolios designed to achieve specific factor exposures, highlighting the existence of infinitely many portfolios meeting the desired factor exposures. In~\cite{dichtl2019optimal}, the effectiveness of the equally-weighted factor portfolio is compared to alternative factor-allocation schemes. Although risk-based approaches have been explored, in particular the feasibility of Risk Budgeting to obtain diversified factor-focused portfolios, it should be noted that, with the exception of a few papers dealing with factor risk contributions in the context of correlated factors (see \cite{meucci2007risk} and \cite{roncalli2016risk}), most papers consider factor risk budgeting portfolios on uncorrelated factors. This is often achieved by using techniques like principal / independent component analysis or Minimum-Torsion Bets to decorrelate existing factors (see \cite{amato2020diversifying}, \cite{lassance2022optimal}~and~\cite{meucci2015risk}). 

\medskip
The primary objective of this paper is to propose a rigorous mathematical framework for constructing diversified portfolios that effectively mitigate the concentration of risk at both the asset and factor levels, especially in the presence of dependent or correlated risk factors. In a setting where asset returns are driven by specific factors, the central challenge for portfolio managers is to manage exposures to these factors effectively. Our first contribution addresses this challenge by introducing a specific formalization through a factor risk measure -- a mapping from factor exposures to risk -- that enables the selection of a unique vector of asset exposures. We demonstrate the financial viability and essential mathematical properties of the proposed factor risk measure, which are crucial for its effective use in risk management and portfolio optimization. These findings and ideas might be of interest beyond risk budgeting techniques for factor investing in general.

\medskip
Additionally, we show that our factor risk measure leads to a formal definition and solution of the Factor Risk Budgeting problem, which involves allocating portfolio risk across chosen risk factors according to predefined risk budgets for each factor. This represents the second contribution of our work. Furthermore, we present an optimization framework that integrates the principles of Factor Risk Budgeting with classical Risk Budgeting to design portfolios that account for both asset and factor risk contributions. To the best of our knowledge, this integrated framework represents a novel contribution to the asset management literature and aims to serve as a practical methodology for achieving diversification. Finally, we highlight the stochastic formulation of the proposed optimization problem, which allows portfolio managers to efficiently work with a wide range of risk measures when quantifying portfolio risk.  

\medskip
In Section~\ref{intro_RB}, we outline the classical Risk Budgeting problem and present an efficient method to find Risk Budgeting portfolios. In Section~\ref{section3}, we present linear factor models and the concept of factor exposures. Then, we introduce factor risk measures which map factor exposures to risk. We show that these factor risk measures possess all the necessary mathematical properties to be used for risk management and portfolio optimization, especially for Risk Budgeting. Section~\ref{section4} precisely defines the Factor Risk Budgeting problem. In Section~\ref{asset_factor_setting}, we propose the Asset-Factor Risk Budgeting framework that seeks a balance between classical Risk Budgeting and Factor Risk Budgeting, considering both asset and factor risk contributions. Section~\ref{section6} presents two case studies where we construct equity portfolios under the BARRA risk model and cross-asset portfolios in the context of a macroeconomic factor model to exemplify the use of Factor Risk Budgeting and Asset-Factor Risk Budgeting portfolios.

\section{Risk Budgeting: the classical problem}
\label{intro_RB}

\subsection{Notations and main concepts}
In this paper, we consider a probability space $(\Omega,\mathcal{F}, \mathbb{P})$ where we denote by $L^0(\Omega,\mathbb{R}^d)$ the set of  $\mathbb R^d$-valued random variables, i.e. measurable functions defined on $\Omega$ with values in $\mathbb R^d$ ($d \ge 1$).

\medskip
At the core of risk management lies the fundamental concept of risk measures: mathematical functions that map random variables onto real numbers. In the context of Risk Budgeting, our focus centers on risk measures $\rho$ defined on $ A \subset L^0(\Omega,\mathbb{R})$ satisfying the following properties:\footnote{The set $A$ depends on the context. It is typically $L^1(\Omega,\mathbb{R})$ or $L^2(\Omega,\mathbb{R})$.}
\begin{eqnarray*}
 \forall Z \in  A , \forall \lambda\geq0 , \quad \rho(\lambda Z) = \lambda \rho(Z) & \text{(positive homogeneity)}\\
 \forall Z_1,Z_2  \in  A , \quad \rho(Z_1+Z_2) \leq \rho(Z_1)+\rho(Z_2) & \text{(sub-additivity)}.
\end{eqnarray*}

In asset management, the risk is associated with a portfolio made of multiple assets. Therefore, we consider a financial universe of $d$ assets whose joint returns are given by a random vector $X = (X_1, \ldots, X_d)'$ where $(1, X_1, \ldots, X_d)$ is assumed to be linearly independent. Given a vector of asset exposures denoted by $y \in \mathbb{R}^d$, the loss associated with $y$ is the random variable $-y'X$ and the risk is $\rho(-y'X)$.\footnote{The vector $y$ can be interpreted as a portfolio where short positions and leverage are allowed.} This leads to a function $\mathcal{R}_{\rho,X}$ that we call an asset exposure risk measure associated with $\rho$ and $X$:
$$\mathcal{R}_{\rho, X}: y \mapsto \rho(-y'X).$$
$\mathcal{R}_{\rho, X}$  is hereafter denoted by $\mathcal{R}$ when there is no ambiguity. We assume that $\mathcal{R}$ is continuous on~$\mathbb{R}^d$ and continuously differentiable on $\mathbb{R}^d\setminus\{\mathbf{0}\}$.\footnote{The risk measure $\rho$ and the distribution of $X$ are assumed to be chosen to ensure that the mentioned properties for $\mathcal{R}$ are satisfied.} We also assume that $\forall y \in \mathbb{R}^d\setminus\{\mathbf{0}\}$, $\mathcal{R}(y)>0$, i.e. the risk of a non-zero financial position is always positive. We finally make the assumption that $\mathcal{R}$ is additive solely under scaled asset exposures:
$$\forall y_1,y_2  \in   \mathbb{R}^d\setminus\{\mathbf{0}\}, \quad \mathcal{R}(y_1+y_2) = \mathcal{R}(y_1)+\mathcal{R}(y_2) \implies \exists c > 0, \quad y_1=c y_2.$$

In practice, the vectors of asset exposures $y$ of portfolios are often vectors of weights. In this paper, a portfolio is denoted by a vector of weights $\theta$ which might belong to $$\Delta_d = \{\theta \in \mathbb{R}^d | \theta_1 + \ldots +\theta_d = 1\} \qquad \text{for general portfolios},$$$$\Delta^{\geq 0}_d = \{ \theta \in \mathbb{R}_+^d | \theta_1 + \ldots +\theta_d = 1\} \qquad \text{for long-only portfolios},$$ or $$\Delta^{>0}_d = \{ \theta \in (\mathbb{R}^*_+)^d | \theta_1 + \ldots +\theta_d = 1\} \qquad \text{for strictly long-only portfolios.}$$ 

\medskip
While $\mathcal{R}(\theta)$ provides an assessment of the total risk associated with the portfolio, it is equally important to measure the individual contributions of each asset to this total risk. In other words, it is interesting to understand how $\mathcal{R}(\theta)$ can be decomposed into asset-level risk contributions. This decomposition of risk can be performed (thanks to the positive homogeneity property), using Euler homogeneous function theorem, i.e.
$$\mathcal{R}(\theta) = \sum_{i=1}^d \theta_i \partial_i \mathcal{R}(\theta),$$
where $\partial_i \mathcal{R}(\theta)$ is the partial derivative of $\mathcal{R}$ with respect to the variable~$\theta_i$ at point $\theta$. In the above sum, the $i^{\text{th}}$ term is called the risk contribution of asset $i$ to the total risk of the portfolio. The concept of risk contribution is a fundamental notion in the Risk Budgeting framework. 

\subsection{Formulation of the Risk Budgeting problem and associated theoretical results}

In the context of Risk Budgeting, the central objective is to achieve predefined levels of risk contributions from the assets of the portfolio. Formally, this task consists of determining portfolio weights to ensure that the proportion of risk contribution allocated to each asset, relative to total risk, corresponds precisely to the risk budgets assigned to them.

\begin{defi}
\label{original_def_rb}
Let $b \in \Delta^{>0}_d$ be a vector of risk budgets. We say that a vector of weights $\theta \in \Delta^{>0}_d$ solves the Risk Budgeting problem $\text{RB}_{b}$ if and only if 
$$\quad \theta_i \partial_i \mathcal{R}(\theta) = b_i \mathcal{R}(\theta),$$ 
for every $i \in \{1,\ldots,d\}$.
\end{defi}
\begin{rem}
This is the classical definition of Risk Budgeting where only strictly long-only portfolios are taken into account. If we remove this constraint, there exist multiple solutions in $\Delta_d$ to $\text{RB}_{b}$ (see \cite{bai2016least} for a detailed analysis in the case of volatility).
\end{rem}
Questions that naturally arise from the above definition concern the existence and uniqueness of a solution. The constructive theorem that follows, proved in \cite{cetingoz2023risk}, states the existence of a unique solution to the Risk Budgeting problem.
\begin{thm}
\label{thm_existence}
Let $b \in \Delta^{>0}_d$.

Let $g : \mathbb{R_+} \rightarrow \mathbb{R}$ be a continuously differentiable convex and increasing function.
Let the function $\Gamma_g:(\Rb_+^*)^d \rightarrow \Rb$ be defined by 
$$ \Gamma_g: y \mapsto g\big(\mathcal{R}(y)\big) - \sum_{i=1}^d b_i \log{y_i}.$$
There exists a unique minimizer $y^*$ of the function $\Gamma_g$.

There exists a unique solution $\theta \in \Delta^{>0}_d$ of $\text{RB}_b$ and $\theta = \frac{y^*}{\sum_{i=1}^d y^*_i}$.
\end{thm}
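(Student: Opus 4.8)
The plan is to exploit the convex structure of $\Gamma_g$ and to match the first-order condition at its minimizer with the Risk Budgeting relations.

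\textbf{Existence and uniqueness of $y^*$.} First I would record that $\mathcal{R}$ is convex on $\mathbb{R}^d$: combining positive homogeneity with sub-additivity gives $\mathcal{R}(t y_1 + (1-t) y_2) \le t\,\mathcal{R}(y_1) + (1-t)\,\mathcal{R}(y_2)$ for $t \in [0,1]$. Since $g$ is convex and increasing, $g \circ \mathcal{R}$ is convex, while $y \mapsto -\sum_i b_i \log y_i$ is strictly convex on $(\mathbb{R}_+^*)^d$ (as each $b_i > 0$). Hence $\Gamma_g$ is strictly convex, which already yields uniqueness of a minimizer if one exists. For existence I would establish coercivity of $\Gamma_g$ on the open orthant by checking that it tends to $+\infty$ along every approach to the boundary: if some $y_i \to 0^+$ the barrier term $-b_i \log y_i$ blows up while the remaining terms stay bounded below; and if $\|y\| \to \infty$, homogeneity gives $\mathcal{R}(y) \ge m\|y\|$ with $m = \min_{\|u\|=1}\mathcal{R}(u) > 0$ (a positive minimum by continuity, strict positivity and compactness of the unit sphere), so that $g(\mathcal{R}(y))$ grows at least linearly — a non-constant convex increasing $g$ satisfies $g(t) \ge g(t_0) + g'(t_0)(t-t_0)$ with $g'(t_0) > 0$ — which dominates the $O(\log\|y\|)$ term coming from the barrier. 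Compact sublevel sets together with strict convexity then give a unique minimizer $y^* \in (\mathbb{R}_+^*)^d$.

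\textbf{From $y^*$ to an RB solution.} Since $y^*$ is interior, the gradient vanishes: $g'(\mathcal{R}(y^*))\,\partial_i \mathcal{R}(y^*) = b_i / y_i^*$ for every $i$. Multiplying by $y_i^*$, summing over $i$, and invoking Euler's theorem $\sum_i y_i^* \partial_i \mathcal{R}(y^*) = \mathcal{R}(y^*)$ together with $\sum_i b_i = 1$ yields $g'(\mathcal{R}(y^*))\,\mathcal{R}(y^*) = 1$; in particular $g'(\mathcal{R}(y^*)) > 0$. Substituting back gives $y_i^* \partial_i \mathcal{R}(y^*) = b_i\,\mathcal{R}(y^*)$. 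Setting $\theta = y^*/\sum_j y_j^* \in \Delta^{>0}_d$ and using that $\partial_i \mathcal{R}$ is homogeneous of degree $0$ while $\mathcal{R}$ is homogeneous of degree $1$, I obtain $\theta_i \partial_i \mathcal{R}(\theta) = b_i \mathcal{R}(\theta)$, so $\theta$ solves $\mathrm{RB}_b$.

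\textbf{Uniqueness of the RB solution.} The key observation is that the RB relations are invariant under positive scaling: if $\theta$ satisfies them, so does $\lambda\theta$ for any $\lambda > 0$ (again by the degree-$0$ and degree-$1$ homogeneity), so the solutions in $(\mathbb{R}_+^*)^d$ form a union of rays, each meeting $\Delta^{>0}_d$ exactly once. Given any RB solution $\theta \in \Delta^{>0}_d$, I would restrict $\Gamma_g$ to the ray $\lambda \mapsto \Gamma_g(\lambda\theta)$; this one-dimensional map is convex and coercive on $(0,\infty)$ (by the same boundary-and-infinity analysis as above), hence attains its minimum at some $\lambda^* > 0$. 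A direct computation using the RB relation $\partial_i \mathcal{R}(\theta) = b_i \mathcal{R}(\theta)/\theta_i$ shows that at $y = \lambda^*\theta$ the full gradient of $\Gamma_g$ vanishes, since the radial stationarity $\lambda^* \mathcal{R}(\theta)\,g'(\lambda^*\mathcal{R}(\theta)) = 1$ makes every partial derivative zero simultaneously. By strict convexity $\Gamma_g$ has a single critical point, so $\lambda^*\theta = y^*$; normalizing forces $\theta = y^*/\sum_j y_j^*$, which proves uniqueness.

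I expect the main obstacle to be the uniqueness argument rather than the existence of $y^*$: the decisive point is the scale-invariance of the RB relations, which lets me lift an arbitrary RB solution along its ray to the unique critical point of the strictly convex $\Gamma_g$. A secondary care point is the coercivity at infinity, where one must verify that the at-least-linear growth of the convex increasing $g$ genuinely dominates the logarithmic barrier.
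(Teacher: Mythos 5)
Your proof is correct. The paper itself defers the proof of Theorem~\ref{thm_existence} to \cite{cetingoz2023risk}, and your argument follows essentially the same route as that reference: strict convexity of $\Gamma_g$ from the logarithmic barrier, coercivity from positive homogeneity (giving $\mathcal{R}(y)\geq m\lVert y\rVert$) together with the at-least-linear growth of the convex increasing $g$, the first-order condition combined with Euler's theorem to pass from the minimizer $y^*$ to a solution of $\text{RB}_b$, and the scale-invariance/ray argument to lift any solution of $\text{RB}_b$ back to the unique critical point of $\Gamma_g$.
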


The importance of Theorem~\ref{thm_existence} goes beyond confirming the existence of a unique solution. Rather than directly engaging with the system of non-linear equations that characterize the original problem given by Definition~\ref{original_def_rb}, this theorem highlights the possibility of approaching the Risk Budgeting problem through unconstrained convex minimization problems. The latter technique has clear benefits in terms of mathematical properties and computational efficiency. Furthermore, the choice of the function $g$ is free, making the approach versatile, as we illustrate in the next section.

\subsection{Computation of Risk Budgeting portfolios}
Early research on Risk Budgeting and most applications have focused on the case where the co-movements of asset returns are measured by their covariance matrix $\Sigma$ and the risk of a portfolio with asset exposures $y$ is measured by its volatility, i.e. $\mathcal{R}(y) := \sqrt{y'\Sigma y}$ ($\mathcal R$ satisfies our assumptions whenever $\Sigma$ is invertible -- an assumption that we consider throughout the paper). For a given $\Sigma$, the volatility of the portfolio and its gradient can easily be computed. Consequently, the Risk Budgeting problem for volatility can be solved efficiently using a gradient descent procedure, by choosing $g(x) = x^2$ and minimizing over $\mathbb{(R_+^*)}^d$ the function  
\begin{equation}
\label{rb_problem_vol}
 \Gamma_{g} : y \mapsto \big(\mathcal{R}(y)\big)^2 - \sum_{i=1}^d b_i \log{y_i} = y'\Sigma y - \sum_{i=1}^d b_i \log{y_i}. 
\end{equation}
However, the scope of risk measures used in portfolio management goes well beyond volatility and includes a wide variety of symmetric and asymmetric risk measures. In \cite{ararat2024mad}, the authors consider Risk Parity when the risk measure is chosen to be mean absolute deviation around the mean and provide a comprehensive mathematical analysis of the specific problem. Considering the inherent asymmetry and heavy-tailed nature of asset return distributions, the literature on Risk Budgeting has also touched upon other asymmetric risk measures. For instance, in~\cite{lezmi2018portfolio}, Risk Budgeting portfolios are constructed for Expected Shortfall.\footnote{The Expected Shortfall at level $\alpha \in (0,1)$ of a continuous random variable (loss) $Z \in L^1$ with positive density is defined by $\mathbb{E}[Z | Z \geq \text{VaR}_\alpha(Z)]$ where $\mathbb{P}(Z \geq \text{VaR}_\alpha(Z)) = 1-\alpha$.} In order to employ the aforementioned gradient descent procedure effectively, they make a specific assumption about asset returns. Precisely, they assume that asset returns follow a mixture of two multivariate Gaussian distributions -- a setting in which Expected Shortfall exhibits a semi-analytic formula (see~\cite{cetingoz2023risk} for an extension to Student mixtures). In cases where there are no parametric assumptions on asset returns or when dealing with more complex risk measures, gradient computation often necessitates approximations through computationally intensive methods (see \cite{gava2021turning}).\footnote{In \cite{jaimungal2023risk}, the authors tackle the Risk Budgeting problem for dynamic risk measures.}     

\medskip
There exists, however, an alternative path that relies on a common representation shared by a wide range of risk measures. Precisely, we are interested in RB-compatible risk measures as defined in \cite{cetingoz2023risk} which are characterized as minima. Indeed, in that case, we can transform the optimization problem, whose solution (after normalization) gives the Risk Budgeting portfolio, into a stochastic optimization problem.

\medskip
Formally, if there exists a continuously differentiable, convex and increasing function $g : \mathbb{R_+} \rightarrow \mathbb{R}$ such that 
\begin{equation*}
g\big(\mathcal{R}(y)\big) = g\big(\rho(-y'X)\big) = \min_{\zeta \in \mathcal{Z}} \mathbb E\big[H(\zeta, -y'X) \big]
\end{equation*} 
for some set $\mathcal{Z}$ and some function $H$, then $\Gamma_{g}$ in Theorem~\ref{thm_existence} writes 

$$\Gamma_{g} : y \mapsto \min_{\zeta \in \mathcal{Z}} \mathbb E\big[H(\zeta, -y'X) \big] - \sum_{i=1}^d b_i \log{y_i}.
$$

Therefore, the $\text{RB}_b$ problem boils down to the stochastic optimization problem

\begin{equation*}
  \min_{y \in (\mathbb{R}_+^*)^d, \zeta \in \mathcal Z} \mathbb E\Big[H(\zeta, -y'X) - \sum_{i=1}^d b_i \log{y_i}\Big].
\end{equation*}
If $(y^*,\zeta^*)$ is a solution of the above stochastic optimization problem, then $\theta := y^*/\sum_{i=1}^d y^*_i$ solves $\text{RB}_b$.

\medskip
In fact, the set of risk measures that enable the above stochastic path for the computation of Risk Budgeting portfolios is quite broad. All deviation measures and spectral risk measures (as formalized in \cite{cetingoz2023risk}) fall within this framework. Volatility and Expected Shortfall are the two most popular members of these classes, respectively. For volatility, i.e. $\rho(Z) = \min_{\zeta \in \mathbb R} \sqrt{\Eb\left[(Z-\zeta)^2\right]}$, the $\text{RB}_b$ problem becomes,\footnote{Of course, using a stochastic gradient descent algorithm in the case of volatility is not the best option. A gradient descent on Eq. \eqref{rb_problem_vol} is more efficient.} upon choosing $g: x \mapsto x^2$,
\begin{equation*}
\label{stocRB_vol}
    \min_{(y,\zeta) \in (\mathbb R_+^*)^d\times\mathbb R} \mathbb E\Big[(-y'X-\zeta)^2 - \sum_{i=1}^d b_i \log{y_i}\Big].
\end{equation*}
For Expected Shortfall, i.e. $\rho(Z)=\min_{\zeta \in \mathbb R} \mathbb E\big[ \zeta + \frac 1{1-\alpha} (Z-\zeta)_+\big]$ as formulated by Rockafellar-Uryasev (see \cite{rockafellar2002conditional}), the $\text{RB}_b$ problem, when $g$ is the identity map, becomes 
\begin{equation}
\label{stocRB_ES}
    \min_{(y,\zeta) \in (\mathbb R_+^*)^d\times\mathbb R} \mathbb E\Big[ \zeta + \frac 1{1-\alpha} (-y'X-\zeta)_+ - \sum_{i=1}^d b_i \log{y_i}\Big].
\end{equation}

It is essential to highlight that this approach is suited to a broad spectrum of asset return distributions. Furthermore, it exhibits a high level of versatility, when it comes to using different risk measures. For many risk measures, a stochastic gradient descent algorithm based on the above ideas is a great option to effectively tackle Risk Budgeting problems.\footnote{In \cite{da2023risk}, the authors propose using cutting planes algorithm for the specific case of Expected Shortfall.} Beyond volatility and Expected Shortfall, some other examples of risk measures covered by the stochastic framework are mean absolute deviation around the median, power spectral risk measure and variantile \cite{wang2020risk}. 

\medskip
In this section, we have outlined the Risk Budgeting problem and presented a computational approach for obtaining Risk Budgeting portfolios. These portfolios are designed to provide optimal asset allocation, achieving predefined levels of risk contribution from each asset. However, in a financial context where asset returns are influenced by a small number of common underlying factors, it becomes crucial to consider controlling exposure to these factors rather than focusing solely on asset weights. Consequently, we should shift our focus towards evaluating the risk contributions of these underlying factors, as opposed to those of individual assets. These critical considerations are the topic of the next sections.

\section{Assessing risk through the lens of risk factors}
\label{section3}
\subsection{Linear factor models and factor exposures}
It is common practice for portfolio and risk managers to write $X$ as a linear function of $m$ factor returns modelled by $F \in L^0(\Omega,\mathbb{R}^m)$ and $d$ idiosyncratic components modelled by $\epsilon \in L^0(\Omega,\mathbb{R}^d)$ where $F$ and $\epsilon$ are assumed to be independent. This modelling framework is commonly known as a linear factor model and is mathematically expressed as
\begin{equation}
\label{factor_model}
  X = \beta F + \epsilon,  
\end{equation}
where $\beta \in \mathbb R^{d \times m}$ is the matrix of factor loadings.\medskip

Factor loadings characterize the sensitivity of assets to specific risk factors. They provide a quantitative measure of how much the return of an asset are expected to fluctuate in response to a one-unit change in a particular risk factor. These risk factors encompass a broad spectrum, including economic factors such as economic growth and inflation, stock-specific features like size, value, quality and profitability, and statistical factors extracted from data. Factor loadings are typically estimated through statistical techniques like regression analysis.

\medskip
The number of factors $m$ is typically chosen to be much smaller than the total number of portfolio assets $d$. Throughout this paper, we maintain the condition $m<d$ and we also assume that the factor loading matrix $\beta$ is full rank, i.e. $\text{rank}(\beta)=m$.

\medskip

In the modeling framework of Eq.~(\ref{factor_model}), the portfolio loss for a given vector of asset exposures~$y$ is 
$$-y'X = -w'F - y'\epsilon,$$
where $w=\beta'y$ is the vector of factor exposures associated with $y$.
\medskip

In such a model, one is mostly interested in managing the portfolio risk through factor exposures and achieving diversification among factors rather than solely focusing on the exposures to the individual assets since their returns are actually dependent on the returns of the underlying factors. Therefore, it makes sense to try to achieve a desired level of factor exposures $w$ by finding a vector of asset exposures $y$ that satisfies $w=\beta'y$. However, an issue arises due to the existence of an infinite number of asset exposure vectors that meet this condition.

\medskip

Because $\beta$ is full rank, one of these infinitely many vectors is $\bar{y}_w = \beta(\beta'\beta)^{-1}w$.\footnote{For instance, \cite{roncalli2016risk} considers this specific case.} Therefore, we clearly see that $\beta'y=w \iff \beta'y=\beta'\bar{y}_w \iff y-\bar{y}_w \in \text{Ker}(\beta') = \text{Im}(\beta)^{\perp}$. Thus, the set of asset exposures with the factor exposures $w$ is the affine subspace $\bar{y}_w+\text{Im}(\beta)^{\perp}$. Amongst all the vectors $y \in \bar{y}_w+\text{Im}(\beta)^{\perp}$ of asset exposures compatible with a vector of factor exposures $w$, some may be more favourable than others when considering additional criteria that are important for portfolio managers. In the following section, we address this concern and introduce a risk-based framework to tackle this issue.

\subsection{From factor exposures to portfolio risk: a factor risk measure}
Within the set of asset exposures that achieve the desired level of factor exposures, it can indeed be interesting to find the vector of asset exposures associated with the minimum level of risk. This is the purpose of the next paragraphs. We introduce below a function that maps a vector of factor exposures into the minimum achievable risk, among all feasible asset exposures.

\begin{defi}
\label{def_new_rm}
For a full-rank matrix $\beta \in \mathbb R^{d \times m}$ and an asset exposure risk measure $\mathcal{R}:\mathbb{R}^d \rightarrow\mathbb{R}$ satisfying the assumptions of Section 2, we define the factor risk measure $\mathcal{S}_{\mathcal R, \beta}:\mathbb{R}^m \rightarrow \mathbb{R}$ (hereafter simply denoted by $\mathcal{S}$) by
$$ \mathcal{S}: w \mapsto \inf_{y\in \bar{y}_w+\text{Im}(\beta)^{\perp}} \mathcal{R}(y) = \inf_{u \in \text{Im}(\beta)^{\perp}} \mathcal{R}(\beta(\beta'\beta)^{-1}w+u).$$
\end{defi}

\begin{rem}
If the asset exposure risk measure is volatility, i.e. $\mathcal{R}(y) := \sqrt{y'\Sigma y}$, then $$\mathcal{S}(w) = \mathcal{R}\big(\Sigma^{-1}\beta\big(\beta'\Sigma^{-1}\beta\big)^{-1}w\big) = \sqrt{w'\big(\beta'\Sigma^{-1}\beta\big)^{-1}w}.$$
\end{rem}

\begin{rem}
$\mathcal{S}$ is convex as it can be expressed with the infimal convolution\footnote{We recall that the infimal convolution of two functions $\phi$ and $\psi$ is defined as $$\phi \square \psi (a) = \inf \{\phi(a-b) + \psi(b) | b \in \mathbb{R}^d \}.$$} of $\mathcal{R}$ and $\mathcal{X}_{\text{Im}(\beta)^{\perp}}$:
$$\mathcal{S}(w) = \mathcal{R} \square \mathcal{X}_{\text{Im}(\beta)^{\perp}}(\beta(\beta'\beta)^{-1}w),$$
where $\mathcal{X}_{F}(u)=  \begin{cases} 
      0 & u \in F \\
      +\infty & u \notin F 
   \end{cases}.$
\end{rem}

In order to study $\mathcal{S}$, we start with two propositions regarding $\mathcal{R} \square \mathcal{X}_{\text{Im}(\beta)^{\perp}}$. 

\begin{prop}
    \label{prop_unique_u}
    $\forall y\in\mathbb{R}^d$, there exists a unique $u^*(y) \in \text{Im}(\beta)^{\perp}$ such that $$\inf_{u \in \text{Im}(\beta)^{\perp}} \mathcal{R}(y+u) = \mathcal{R}(y+u^*(y)).$$
\end{prop}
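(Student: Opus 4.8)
The plan is to exploit the fact that, under the stated assumptions, $\mathcal{R}$ is a sublinear (hence convex) and coercive function, and to read the \emph{strict} additivity hypothesis as a disguised strict-convexity statement. As preliminaries I would record two elementary consequences of the assumptions. First, positive homogeneity together with subadditivity gives, for $\lambda\in[0,1]$, $\mathcal{R}(\lambda z_1+(1-\lambda)z_2)\le\lambda\mathcal{R}(z_1)+(1-\lambda)\mathcal{R}(z_2)$, so $\mathcal{R}$ is convex. Second, since $\mathcal{R}$ is continuous and strictly positive on the compact unit sphere, it attains there a minimum $c_0>0$; homogeneity then yields the linear lower bound $\mathcal{R}(z)\ge c_0\|z\|$ for all $z$, i.e. $\mathcal{R}$ is coercive.

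For existence, fix $y$ and consider $f:u\mapsto\mathcal{R}(y+u)$ on the closed linear subspace $V:=\mathrm{Im}(\beta)^{\perp}$. The coercivity bound gives $f(u)\ge c_0\|y+u\|\ge c_0(\|u\|-\|y\|)$, so $f(u)\to+\infty$ as $\|u\|\to\infty$ within $V$. A standard Weierstrass argument (minimising the continuous $f$ over a large enough closed ball intersected with $V$) then produces a minimiser $u^*(y)$, and the value of the infimum is finite.

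The uniqueness part carries the real content and is the step I expect to require the most care. Suppose $u_1,u_2\in V$ are two minimisers and set $z_j:=y+u_j$, so $\mathcal{R}(z_1)=\mathcal{R}(z_2)=v$, the infimum. Since $V$ is a subspace, $\tfrac12(u_1+u_2)\in V$, and convexity combined with minimality forces the chain $v\le\mathcal{R}\!\big(\tfrac12(z_1+z_2)\big)\le\tfrac12(\mathcal{R}(z_1)+\mathcal{R}(z_2))=v$ to be a string of equalities. Rewriting $\mathcal{R}\!\big(\tfrac12(z_1+z_2)\big)=\tfrac12\mathcal{R}(z_1+z_2)$ via positive homogeneity yields the additivity equality $\mathcal{R}(z_1+z_2)=\mathcal{R}(z_1)+\mathcal{R}(z_2)$.

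I would then split on the value $v$. If $v=0$, positivity forces $z_1=z_2=0$, hence $u_1=u_2=-y$. If $v>0$, then $z_1,z_2\neq\mathbf{0}$, so the strict additivity assumption applies and gives $c>0$ with $z_1=cz_2$; homogeneity then yields $v=\mathcal{R}(z_1)=c\,\mathcal{R}(z_2)=cv$, so $c=1$ and $z_1=z_2$, i.e. $u_1=u_2$. Either way the minimiser is unique. The only delicate points are verifying that the strict additivity hypothesis is genuinely applicable — both of its arguments must be nonzero, which is precisely why the value $v$ must be separated from $0$ — and noting that the degenerate case $v=0$ occurs exactly when $-y\in V$, where the conclusion is immediate from the positivity of $\mathcal{R}$.
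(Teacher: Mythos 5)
Your proof is correct and takes essentially the same route as the paper's: coercivity from the positive minimum of $\mathcal{R}$ on the unit sphere gives existence, and uniqueness follows from the midpoint argument forcing the equality case of sub-additivity, to which the scaled-exposure assumption is then applied after ruling out zero arguments (your case $v=0$ is exactly the paper's case $y\in\text{Im}(\beta)^{\perp}$). The only cosmetic difference is the last step: you get $c=1$ from $v=cv$ with $v>0$ via positive homogeneity, while the paper gets it from $(1-c)y\in\text{Im}(\beta)^{\perp}$ together with $y\notin\text{Im}(\beta)^{\perp}$; both finishes are valid.
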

\begin{proof}
    Let $m = \underset{x\in\mathbb{R}^d, \lVert x \rVert =1}\min \mathcal{R}(x)>0.$ Then, $\forall y\in\mathbb{R}^d\setminus\{\mathbf{0}\}$, $$\quad \mathcal{R}(y) =  \lVert y  \rVert \mathcal{R}\bigg(\frac{y}{\lVert y  \rVert}\bigg) \geq \lVert y  \rVert m.$$
    Therefore, $\mathcal{R}$ is coercive. 

\medskip 

We deduce that the convex function $u \in \text{Im}(\beta)^{\perp} \mapsto \mathcal{R}(y+u)$ is coercive and therefore $\forall y\in~\mathbb{R}^d$, $\exists u^* \in \text{Im}(\beta)^{\perp}$ such that $$\quad \mathcal{R}(y+u^*) = \inf_{u \in \text{Im}(\beta)^{\perp}} \mathcal{R}(y+u) = \min_{u \in \text{Im}(\beta)^{\perp}} \mathcal{R}(y+u) =: \frak m.$$

Now, if $y \in \text{Im}(\beta)^{\perp}$, $u^*$ is unique and given by $-y$ as $\mathcal R(\mathbf 0) = 0$.\\

On the contrary, if $y \notin \text{Im}(\beta)^{\perp}$, and if $u_1^*, u_2^* \in \text{Im}(\beta)^{\perp}$ are such that $$\mathcal{R}(y+u_1^*) = \mathcal{R}(y+u_2^*) = \frak m,$$ 
then,  
$$\frak m \leq \mathcal{R}\Big(y+\frac{u_1^*+u_2^*}{2}\Big) = \mathcal{R}\Big(\frac{y+u_1^*}{2}+\frac{y+u_2^*}{2}\Big) \leq \frac{\mathcal{R}(y+u_1^*)}{2}+\frac{\mathcal{R}(y+u_2^*)}{2} \le \frak m.$$

In other words, we are in the equality case of the sub-additivity property of $\mathcal R$. As $y \notin \text{Im}(\beta)^{\perp}$, $y+u_1^* \neq \mathbf{0}$ and $y+u_2^* \neq \mathbf{0}$. Therefore $$\exists c >0, \quad y+u_1^* = c (y+u_2^*).$$

Thus, $(1-c)y = c u_1^* - u_2^* \in \text{Im}(\beta)^{\perp}$ so that $c=1$ and $u_1^* = u_2^*.$
\end{proof}

\begin{prop}
The convex function $\mathcal{R} \square \mathcal{X}_{\text{Im}(\beta)^{\perp}}$ is proper, positively homogeneous on $\mathbb{R}^d$ and continuously differentiable on $\mathbb{R}^d\setminus \text{Im}(\beta)^{\perp}$ with
$$\forall y \in \mathbb{R}^d\setminus \text{Im}(\beta)^{\perp}, \quad \nabla(\mathcal{R} \square \mathcal{X}_{\text{Im}(\beta)^{\perp}})(y) = \nabla \mathcal{R}(y-u^*(y)).$$ 
\end{prop}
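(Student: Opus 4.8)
The plan is to abbreviate $V=\mathrm{Im}(\beta)^{\perp}$ and $f=\mathcal R \square \mathcal X_{\mathrm{Im}(\beta)^{\perp}}$, so that by Proposition~\ref{prop_unique_u} we have $f(y)=\min_{u\in V}\mathcal R(y+u)=\mathcal R(y+u^*(y))$, with the minimum attained at the unique $u^*(y)$. The two easy assertions come first. Properness is immediate: since $\mathcal R\ge 0$ we get $f\ge 0>-\infty$, and $f$ is finite everywhere because the minimum is attained, so $f$ is a proper (in fact finite-valued) convex function. For positive homogeneity on $\mathbb R^d$, i.e. for $\lambda\ge 0$: when $\lambda>0$ I would use the substitution $u=\lambda u'$, which preserves $V$ because $V$ is a subspace, together with positive homogeneity of $\mathcal R$, to obtain $f(\lambda y)=\lambda f(y)$; when $\lambda=0$, the facts $0\in V$, $\mathcal R(\mathbf 0)=0$ and $\mathcal R>0$ off the origin force $f(\mathbf 0)=0=0\cdot f(y)$.

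The substantive part is differentiability on $\mathbb R^d\setminus V$ (an open set, since $V$ is closed) and the gradient formula. My plan is to compute the subdifferential $\partial f(y)$ for $y\notin V$ and show it is a singleton; a finite convex function is automatically $C^1$ on any open set on which it is everywhere differentiable, so this suffices. I would invoke the subdifferential rule for an \emph{exact} infimal convolution: if $f=\phi\square\psi$ and the infimum defining $f(y)$ is attained at a split $y=a^*+b^*$ (with $b^*$ in the $\psi$-slot), then $\partial f(y)=\partial\phi(a^*)\cap\partial\psi(b^*)$. Here $\phi=\mathcal R$, $\psi=\mathcal X_{\mathrm{Im}(\beta)^{\perp}}$, and the minimizing split is $a^*=y+u^*(y)$, $b^*=-u^*(y)\in V$. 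For $y\notin V$ one has $a^*=y+u^*(y)\neq\mathbf 0$ (otherwise $y=-u^*(y)\in V$), so $\mathcal R$ is differentiable at $a^*$ and $\partial\mathcal R(a^*)=\{\nabla\mathcal R(y+u^*(y))\}$, while $\partial\mathcal X_{\mathrm{Im}(\beta)^{\perp}}(b^*)$ is the normal cone to the subspace $V$, namely $V^{\perp}=\mathrm{Im}(\beta)$.

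The final step is to check that this intersection is nonempty and collapses to the singleton, which is exactly the first-order optimality condition satisfied by $u^*(y)$: since $u^*(y)$ minimizes the smooth convex map $u\mapsto\mathcal R(y+u)$ over the subspace $V$, the directional derivative along every direction in $V$ vanishes, i.e. $\nabla\mathcal R(y+u^*(y))\perp V$, so $\nabla\mathcal R(y+u^*(y))\in V^{\perp}$. Hence $\partial f(y)=\{\nabla\mathcal R(y+u^*(y))\}\cap V^{\perp}=\{\nabla\mathcal R(y+u^*(y))\}$, which gives differentiability at $y$ with gradient $\nabla\mathcal R$ evaluated at the minimizing argument $y+u^*(y)$ of Proposition~\ref{prop_unique_u}; continuity of $\nabla f$ on $\mathbb R^d\setminus V$ then follows from the standard fact that the gradient of a finite convex function is continuous wherever it exists on an open set. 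The main obstacle is this subdifferential-of-infimal-convolution step: one must be sure the infimum is attained (supplied by Proposition~\ref{prop_unique_u}) so that the intersection formula is valid, and then verify the optimality condition that forces the unique element of $\partial\mathcal R(a^*)$ into $V^{\perp}$, for without it the intersection could a priori be empty. An alternative, more computational route would establish that $y\mapsto u^*(y)$ is smooth via the implicit function theorem and differentiate $\mathcal R(y+u^*(y))$ directly, the envelope/first-order condition cancelling the $\partial u^*/\partial y$ term; this is equivalent but demands extra regularity of $\mathcal R$ (a Hessian nonsingular along $V$), so I would favour the subdifferential argument.
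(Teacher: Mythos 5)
Your proof is correct, and on the key differentiability step it takes a genuinely different route from the paper's. The properness and positive-homogeneity arguments coincide with the paper's (the paper bounds $\mathcal{R}\square\mathcal{X}_{\text{Im}(\beta)^{\perp}}\le\mathcal{R}$ using $\mathbf{0}\in\text{Im}(\beta)^{\perp}$ rather than citing attainment, which is immaterial). For the gradient, the paper works on the dual side: it computes $(\mathcal{R}\square\mathcal{X}_{\text{Im}(\beta)^{\perp}})^*=\mathcal{R}^*+\mathcal{X}_{\text{Im}(\beta)}$, shows that $\mathcal{R}^*$ vanishes on the ball $\bar{B}(0,m)$ so that the constraint qualification $\text{ri}(\text{dom}(\mathcal{R}^*))\cap\text{ri}(\text{dom}(\mathcal{X}_{\text{Im}(\beta)}))\neq\emptyset$ holds, applies the subdifferential sum rule to the conjugate, and then inverts subdifferentials via $p\in\partial f(y)\iff y\in\partial f^*(p)$ (which implicitly uses that $f$ is finite convex, hence closed) to characterize $\partial f(y)$. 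You stay on the primal side and invoke the exact-infimal-convolution rule $\partial(\phi\square\psi)(y)=\partial\phi(a^*)\cap\partial\psi(b^*)$, whose only hypothesis---attainment of the infimum at the split---is precisely what Proposition~\ref{prop_unique_u} supplies; both arguments then finish identically, using the first-order condition for minimizing $u\mapsto\mathcal{R}(y+u)$ over the subspace to place $\nabla\mathcal{R}$ at the argmin inside $\text{Im}(\beta)$, so that the subdifferential is a nonempty singleton, with continuous differentiability following from standard convexity facts. Your route is leaner: the intersection formula requires no constraint qualification, no computation of $\mathcal{R}^*$, and no closedness discussion, whereas the paper's dual detour makes the conjugate objects explicit at the cost of those verifications. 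One cosmetic remark: you write the gradient as $\nabla\mathcal{R}(y+u^*(y))$ while the statement reads $\nabla\mathcal{R}(y-u^*(y))$; since $\text{Im}(\beta)^{\perp}$ is a linear subspace this is purely a sign convention for $u^*$, and your sign is the one consistent with Proposition~\ref{prop_unique_u} (and with Proposition~\ref{corolarry1}), so no correction is needed.
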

\begin{proof}

$\mathcal{R}$ and $\mathcal{X}_{\text{Im}(\beta)^{\perp}}$ are nonnegative and, therefore, $\mathcal{R} \square \mathcal{X}_{\text{Im}(\beta)^{\perp}} \ge 0$. Moreover, $\mathbf 0 \in \text{Im}(\beta)^{\perp}$, so $\mathcal{R} \square \mathcal{X}_{\text{Im}(\beta)^{\perp}} \le \mathcal R$. This proves that  $\mathcal{R} \square \mathcal{X}_{\text{Im}(\beta)^{\perp}}$ is proper.

\medskip

By positive homogeneity, we have  $\mathcal{R} \square \mathcal{X}_{\text{Im}(\beta)^{\perp}}(\mathbf{0})=0$, and for all $y \in \mathbb{R}^d$ and $\lambda>0$: 
\begin{align*}
    \mathcal{R} \square \mathcal{X}_{\text{Im}(\beta)^{\perp}}(\lambda y) &= \inf_{x\in\mathbb R^d} \mathcal{R}(\lambda y-x) +  \mathcal{X}_{\text{Im}(\beta)^{\perp}}(x) \\
    &=\inf_{\tilde x\in\mathbb R^d} \mathcal{R}(\lambda y-\lambda \tilde x ) +  \mathcal{X}_{\text{Im}(\beta)^{\perp}}(\lambda \tilde x ) \\
    &=\lambda \inf_{\tilde x\in\mathbb R^d} \mathcal{R}( y-\tilde x) +  \mathcal{X}_{\text{Im}(\beta)^{\perp}}(\tilde x)\\
    &=\lambda \mathcal{R} \square \mathcal{X}_{\text{Im}(\beta)^{\perp}}(y).
\end{align*}

\medskip
The dual conjugate of $\mathcal{R} \square \mathcal{X}_{\text{Im}(\beta)^{\perp}}$ is 
    $$(\mathcal{R} \square \mathcal{X}_{\text{Im}(\beta)^{\perp}})^* = \mathcal{R}^* + (\mathcal{X}_{\text{Im}(\beta)^{\perp}})^* = \mathcal{R}^* + \mathcal{X}_{\text{Im}(\beta)}.$$

For all $y,p \in \mathbb{R}^d$, we have $$p'y - \mathcal{R}(y) \leq \lVert p\rVert \lVert y\rVert - m \lVert y\rVert$$ where $m = \underset{x\in\mathbb{R}^d, \lVert x \rVert =1}\min \mathcal{R}(x)>0.$ So, $\forall p \in \bar{B}(0,m)$,  $\mathcal{R}^*(p)=0.$

\medskip

In particular, $\mathbf{0} \in \text{ri}(\text{dom}(\mathcal{R}^*))$ and $\mathbf{0} \in \text{ri}(\text{dom}\mathcal{X}_{\text{Im}(\beta)})=\text{Im}(\beta).$

\medskip

So $\text{ri}(\text{dom}(\mathcal{R}^*)) \cap \text{ri}(\text{dom}\mathcal{X}_{\text{Im}(\beta)}) \neq \emptyset$ and we have $\forall p \in \text{dom}(\mathcal{R}^*) \cap \text{Im}(\beta)$ that\footnote{$\partial f(x)$ denotes the subdifferential, the set of all subgradients of $f$ at $x$.} $$ \partial(\mathcal{R}^*+\mathcal{X}_{\text{Im}(\beta)})(p)=\partial \mathcal{R}^*(p) +\partial \mathcal{X}_{\text{Im}(\beta)}(p) = \partial \mathcal{R}^*(p) + \text{Im}(\beta)^{\perp}.$$

    Now, $\forall y \in \mathbb{R}^d\setminus \text{Im}(\beta)^{\perp}$,
    \begin{align*}
        p \in \partial (\mathcal{R} \square \mathcal{X}_{\text{Im}(\beta)^{\perp}})(y)&\iff y \in  \partial ((\mathcal{R} \square \mathcal{X}_{\text{Im}(\beta)^{\perp}})^*)(p) \\
        &\iff 
         \begin{cases} 
              p \in \text{dom}(\mathcal{R}^*) \cap \text{Im}(\beta) \\
              y \in \partial \mathcal{R}^*(p) + \text{Im}(\beta)^{\perp} 
   \end{cases} \\
        &\iff 
         \begin{cases} 
              p \in \text{dom}(\mathcal{R}^*) \cap \text{Im}(\beta) \\
              \exists u \in \text{Im}(\beta)^{\perp}, y-u \in \partial \mathcal{R}^*(p) 
   \end{cases} \\
        &\iff 
         \begin{cases} 
              p \in \text{dom}(\mathcal{R}^*) \cap \text{Im}(\beta) \\
              \exists u \in \text{Im}(\beta)^{\perp}, p = \nabla \mathcal{R}(y-u) 
   \end{cases}
    \end{align*}
    as $y-u \neq \mathbf{0}$, since $y \notin \text{Im}(\beta)^{\perp}.$

\medskip
Therefore, if $p \in \partial (\mathcal{R} \square \mathcal{X}_{\text{Im}(\beta)^{\perp}})(y)$, then  $\exists u \in \text{Im}(\beta)^{\perp}$ such that $p = \nabla \mathcal{R}(y-u) \in \text{Im}(\beta).$ This is exactly the first order condition of the convex optimization problem of Proposition \ref{prop_unique_u}. Therefore, $u=u^*(y)$ and $p=\nabla \mathcal{R}(y-u^*(y))$. Differentiability then comes from the fact that subdifferentials are singletons on $\mathbb{R}^d\setminus \text{Im}(\beta)^{\perp}$ and continuous differentiability is a consequence of convexity. In particular $$\nabla(\mathcal{R} \square \mathcal{X}_{\text{Im}(\beta)^{\perp}})(y) = \nabla \mathcal{R}(y-u^*(y)).$$ 
\end{proof}

From the above Propositions, we deduce the following Proposition on $\mathcal{S}$ which is central for our analysis.\\

\begin{prop}
\label{corolarry1}
$\mathcal{S}$ is positively homogeneous, convex on $\mathbb{R}^m$ and continuously differentiable on $\mathbb{R}^m\setminus\{\mathbf{0}\}$. Moreover, 
$$\forall w \neq \mathbf{0}, \quad \nabla \mathcal{S}(w) = (\beta'\beta)^{-1} \beta' \nabla \mathcal{R}\big(\beta(\beta'\beta)^{-1}w + u^*(\beta(\beta'\beta)^{-1}w)\big).$$
\end{prop}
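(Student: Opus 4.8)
The plan is to recognize $\mathcal{S}$ as the composition $\mathcal{S} = \Phi \circ L$, where $\Phi := \mathcal{R} \square \mathcal{X}_{\text{Im}(\beta)^{\perp}}$ is the infimal convolution whose properties were established in the preceding Proposition, and $L : \mathbb{R}^m \to \mathbb{R}^d$ is the linear map $L(w) = \beta(\beta'\beta)^{-1}w$. Since $\beta$ has full rank $m$, the Gram matrix $\beta'\beta$ is invertible, $L$ is injective, and $\text{Im}(L) = \text{Im}(\beta)$. All three assertions about $\mathcal{S}$ should then follow by transporting the corresponding properties of $\Phi$ through $L$.

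First I would dispatch positive homogeneity and convexity. For $\lambda > 0$, linearity of $L$ together with the positive homogeneity of $\Phi$ gives $\mathcal{S}(\lambda w) = \Phi(\lambda L(w)) = \lambda \Phi(L(w)) = \lambda \mathcal{S}(w)$, while convexity of $\mathcal{S}$ on $\mathbb{R}^m$ is immediate as the composition of the convex function $\Phi$ with the linear map $L$. These steps are routine.

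The central point is continuous differentiability on $\mathbb{R}^m \setminus \{\mathbf{0}\}$, since the preceding Proposition only provides $C^1$ regularity of $\Phi$ away from $\text{Im}(\beta)^{\perp}$. I must therefore verify that $L(w) \notin \text{Im}(\beta)^{\perp}$ whenever $w \neq \mathbf{0}$: indeed $L(w) \in \text{Im}(\beta)$ by construction, injectivity of $L$ forces $L(w) \neq \mathbf{0}$, and $\text{Im}(\beta) \cap \text{Im}(\beta)^{\perp} = \{\mathbf{0}\}$, so $L(w) \in \text{Im}(\beta) \setminus \{\mathbf{0}\}$ lies outside $\text{Im}(\beta)^{\perp}$. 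Hence $\Phi$ is continuously differentiable at $L(w)$, and by the chain rule so is $\mathcal{S}$ at each $w \neq \mathbf{0}$. I regard this as the main obstacle, as it is the only place where the specific geometry of $L$ (rather than mere linearity) is used.

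Finally, the gradient formula comes from the chain rule $\nabla \mathcal{S}(w) = L' \nabla \Phi(L(w))$ with $L' = (\beta'\beta)^{-1}\beta'$ (using the symmetry of $(\beta'\beta)^{-1}$), after inserting the gradient expression for $\Phi$ from the preceding Proposition evaluated at $y = \beta(\beta'\beta)^{-1}w$. I expect the only remaining delicate point to be a careful tracking of the sign convention of $u^*(y)$ from Proposition~\ref{prop_unique_u}, so that the substituted expression matches the stated formula $\nabla \mathcal{S}(w) = (\beta'\beta)^{-1} \beta' \nabla \mathcal{R}\big(\beta(\beta'\beta)^{-1}w + u^*(\beta(\beta'\beta)^{-1}w)\big)$; everything else is a direct transfer of properties through $L$.
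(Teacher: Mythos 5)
Your proposal is correct and is essentially the paper's own proof: the paper likewise views $\mathcal{S}$ as the infimal convolution $\mathcal{R} \square \mathcal{X}_{\text{Im}(\beta)^{\perp}}$ composed with the linear map $w \mapsto \beta(\beta'\beta)^{-1}w$, and rests on exactly your key observation that $\beta(\beta'\beta)^{-1}w \in \text{Im}(\beta)\setminus\{\mathbf{0}\} \subset \mathbb{R}^d\setminus\text{Im}(\beta)^{\perp}$ for $w \neq \mathbf{0}$, with the homogeneity, convexity and gradient formula transferred through the chain rule as you describe. Your flagged concern about the sign of $u^*$ is well taken—the preceding proposition writes $\nabla\mathcal{R}(y-u^*(y))$ while the statement here uses $+u^*$—but since $\text{Im}(\beta)^{\perp}$ is a subspace the two conventions coincide up to replacing $u^*$ by $-u^*$, so this is only a notational wobble, not a gap.
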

\begin{proof}
This result is a consequence of the definition of $\mathcal S$ and the above propositions since, for $w \neq \mathbf 0$, $\beta(\beta'\beta)^{-1}w \in \text{Im}(\beta) \setminus \{\mathbf 0\} \subset \mathbb{R}^d\setminus \text{Im}(\beta)^{\perp}$.\\
\end{proof}

The function $\mathcal S$ allows to compute the minimum risk achievable for a given vector of factor exposures, and thus assesses the suitability of a given factor allocation scheme, for example, in terms of risk compliance. By Proposition \ref{prop_unique_u}, this minimum risk is associated with a unique vector of asset exposures given by $y^*(w) = \beta(\beta'\beta)^{-1}w + u^*(\beta(\beta'\beta)^{-1}w)$.\\

The fact that $\mathcal S$ 
is positively homogeneous and continuously differentiable on $\mathbb{R}^m\setminus\{\mathbf{0}\}$ enables to compute the sensitivity of the portfolio risk to the factors for a given vector of factor or asset exposures.

\medskip
Given a vector $w$ of factor exposures, Euler homogeneous function theorem leads to a breakdown of risk as\footnote{
As stated in Proposition \ref{corolarry1}, the risk contribution of the factor $i$, i.e. $w_i \partial_i \mathcal{S}(w)$, can be easily calculated using $\nabla \mathcal{R}$. When the risk measure is volatility, this gradient has an analytic form.  As for the sensitivity of Expected Shortfall, general expressions exist (see \cite{gourieroux2000sensitivity} and \cite{tasche1999risk}).}
$$\mathcal{S}(w) = \sum_{i=1}^m w_i \partial_i \mathcal{S}(w) = w' \nabla \mathcal{S}(w) .$$
This measure of risk $\mathcal{S}(w)$ corresponds to $\mathcal{R}(y^*(w))$: the risk associated with the optimal (with respect to $\mathcal R$) portfolio with factor exposures $w$. Now, given a vector of asset exposures $y$, we can consider the associated factor exposures $\beta'y$ and examine the breakdown of risk:
\begin{eqnarray*}
\mathcal{R}(y) &=& \mathcal{R}(y) - \mathcal{R}(y^*(\beta'y)) + \mathcal{R}(y^*(\beta'y))\\
&=& \underbrace{\mathcal{R}(y) - \mathcal{R}(y^*(\beta'y))}_{\text{(I)}} + \underbrace{\mathcal{S}(\beta'y)}_{\text{(II)}}.\\
\end{eqnarray*}
(I) is a nonnegative term, equal to $0$ if and only if $y = y^*(\beta'y)$. It corresponds to the risk of not choosing the optimal portfolio given factor exposures. (II) corresponds to the risk associated with the risk factors as measured by $\mathcal S$. Of course, (II) can be broken down as above across risk factors using Euler homogeneous function theorem.

\medskip
In addition to evaluating portfolio risk through a factor-based approach, we can take it a step further and actively manage the risk contributions of the factors using the concept of Risk Budgeting introduced in Section~\ref{intro_RB}, though with $\mathcal S$ replacing $\mathcal R$ because they share the same fundamental properties needed for Risk Budgeting (see Proposition \ref{corolarry1}). This is the purpose of the next section.

\section{Factor Risk Budgeting: managing risk contributions of factors}
\label{section4}
\subsection{Framing the problem of Factor Risk Budgeting}
In the previous section, we established that it is possible to introduce a factor risk measure, i.e. a function mapping a vector of factor exposures to a risk figure. We also demonstrated that this function adheres to all the necessary properties of asset exposure risk measures, particularly in the context of Risk Budgeting, making it well-suited for portfolio construction.

\medskip
By analogy with Risk Budgeting, we can use $\mathcal S$ to define a Factor Risk Budgeting portfolio allocating risk across the underlying risk factors given factor risk budgets. Mathematically, we define the Factor Risk Budgeting (FRB) problem as follows: 

\begin{defi}
\label{FRB_problem_def}
Let $\beta \in \mathbb R^{d \times m}$ be a full rank matrix of factor loadings. Let $b \in \Delta^{>0}_m$ be a vector of factor risk budgets. We say that a vector of weights $\theta \in \Delta_d$, solves the Factor Risk Budgeting problem $\text{FRB}_{b}$ if and only if 
$$\quad w_i \partial_i \mathcal{S}(w) = b_i \mathcal{S}(w),$$ 
for every $i \in \{1,\ldots,m\}$ where $w = \beta'\theta \in (\Rb_+^*)^m$.
\end{defi}

\begin{rem}
\label{rem_FRB}
In the above definition, the vector $w$ of factor exposures has positive coordinates. However, Factor Risk Budgeting portfolios are not necessarily long-only portfolios. 
\end{rem}

By analogy with Theorem \ref{thm_existence}, we can state the following result:

\begin{thm}
\label{thm_existence_FRM}
Let $b \in \Delta^{>0}_m$.

Let $g : \mathbb{R_+} \rightarrow \mathbb{R}$ be a continuously differentiable convex and increasing function.
Let the function $\Gamma_g:(\Rb_+^*)^m \rightarrow \Rb$ be defined by
\vspace{-3mm} 
$$ \Gamma_g: w \mapsto g\big(\mathcal{S}(w)\big) - \sum_{i=1}^m b_i \log{w_i}.$$
There exists a unique minimizer $w^*$ of the function $\Gamma_g$.

Then, $\forall y \in y^*(w^*) + \text{Im}(\beta)^\perp$, $\sum_{i=1}^d y_i \neq 0 \implies \theta(y) = \frac{y}{\sum_{i=1}^d y_i} \in \Delta_d$ is solution of~$\text{FRB}_{b}$.\\
Moreover, if $\theta \in \Delta_d$ is solution of $\text{FRB}_{b}$, then $w = \beta'\theta$ is proportional to $w^*$.
\end{thm}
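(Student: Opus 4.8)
The plan is to reduce the entire statement to Theorem~\ref{thm_existence} by showing that $\mathcal{S}$, viewed as a function on $\mathbb{R}^m$, satisfies on $\mathbb{R}^m$ exactly the hypotheses that $\mathcal{R}$ satisfies on $\mathbb{R}^d$. Proposition~\ref{corolarry1} already supplies positive homogeneity, convexity on $\mathbb{R}^m$, and continuous differentiability on $\mathbb{R}^m\setminus\{\mathbf{0}\}$; convexity with finiteness gives continuity everywhere, and convexity with positive homogeneity gives sub-additivity (since $\mathcal{S}(w_1+w_2)=2\mathcal{S}(\tfrac{w_1+w_2}{2})\le \mathcal{S}(w_1)+\mathcal{S}(w_2)$). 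So only two hypotheses of Section~2 remain to be checked for $\mathcal{S}$: strict positivity away from the origin and \emph{additivity only under scaling}. For both I would use the representation $\mathcal{S}(w)=\mathcal{R}(y^*(w))$ with $y^*(w)=\beta(\beta'\beta)^{-1}w+u^*(\beta(\beta'\beta)^{-1}w)$, valid because the infimum in Definition~\ref{def_new_rm} is attained (Proposition~\ref{prop_unique_u}). Positivity is then immediate: for $w\neq\mathbf{0}$ one has $\beta' y^*(w)=w\neq\mathbf{0}$, so $y^*(w)\neq\mathbf{0}$ and hence $\mathcal{S}(w)=\mathcal{R}(y^*(w))>0$.

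The delicate hypothesis, which I expect to be the main obstacle, is additivity only under scaling. The plan is as follows: assume $\mathcal{S}(w_1+w_2)=\mathcal{S}(w_1)+\mathcal{S}(w_2)$ with $w_1,w_2\neq\mathbf{0}$, and set $y_1^*=y^*(w_1)$, $y_2^*=y^*(w_2)$, both nonzero. Since $\beta'(y_1^*+y_2^*)=w_1+w_2$, the vector $y_1^*+y_2^*$ is feasible in the infimum defining $\mathcal{S}(w_1+w_2)$, so
$$\mathcal{S}(w_1+w_2)\;\le\;\mathcal{R}(y_1^*+y_2^*)\;\le\;\mathcal{R}(y_1^*)+\mathcal{R}(y_2^*)\;=\;\mathcal{S}(w_1)+\mathcal{S}(w_2).$$
The assumed equality forces $\mathcal{R}(y_1^*+y_2^*)=\mathcal{R}(y_1^*)+\mathcal{R}(y_2^*)$, and the additivity-only-under-scaling property of $\mathcal{R}$ yields $y_1^*=c\,y_2^*$ for some $c>0$; applying $\beta'$ gives $w_1=c\,w_2$. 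This confirms that $\mathcal{S}$ meets every assumption of Section~2, so Theorem~\ref{thm_existence} applies verbatim with $(\mathcal{S},m)$ replacing $(\mathcal{R},d)$, producing the unique minimizer $w^*$ of $\Gamma_g$ together with a unique $\tilde w=w^*/\sum_i w_i^*\in\Delta^{>0}_m$ solving the Risk Budgeting problem for $\mathcal{S}$; by scale-invariance $w^*$ itself satisfies the budget equations $w_i^*\,\partial_i\mathcal{S}(w^*)=b_i\,\mathcal{S}(w^*)$.

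The final step transfers these facts from factor exposures to portfolio weights. The key observation I would use is that the budget equations are invariant under positive scaling of $w$, since $\mathcal{S}$ is positively homogeneous and hence $\nabla\mathcal{S}$ is homogeneous of degree $0$; therefore the set of $w\in(\mathbb{R}_+^*)^m$ satisfying them is exactly the ray $\{\lambda w^*:\lambda>0\}$. For sufficiency, take $y\in y^*(w^*)+\text{Im}(\beta)^\perp$ with $\sum_i y_i\neq 0$: because $\text{Im}(\beta)^\perp=\text{Ker}(\beta')$, one has $\beta' y=\beta' y^*(w^*)=w^*$, whence $\beta'\theta(y)=w^*/\sum_i y_i$, a positive multiple of $w^*$ that lies in $(\mathbb{R}_+^*)^m$ and satisfies the budget equations, so $\theta(y)$ solves $\text{FRB}_b$. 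For the converse, if $\theta\in\Delta_d$ solves $\text{FRB}_b$ then $w=\beta'\theta\in(\mathbb{R}_+^*)^m$ satisfies the budget equations, so the ray characterization forces $w$ to be proportional to $w^*$. The one point requiring care in the sufficiency direction is the sign of the normalizing scalar $\sum_i y_i$: it must be positive to guarantee the positivity constraint $\beta'\theta\in(\mathbb{R}_+^*)^m$ built into Definition~\ref{FRB_problem_def}, which is the role played by the nonvanishing-sum hypothesis there.
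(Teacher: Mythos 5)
Your proposal follows precisely the route the paper intends: the paper gives no separate proof of Theorem~\ref{thm_existence_FRM}, stating it ``by analogy with Theorem~\ref{thm_existence}'' on the strength of Proposition~\ref{corolarry1}. What you add, and what the paper leaves implicit, is the verification that $\mathcal{S}$ satisfies \emph{all} of the Section~2 hypotheses rather than only the three recorded in Proposition~\ref{corolarry1}. Your two supplementary checks are correct and are exactly what is needed for Theorem~\ref{thm_existence} to apply with $(\mathcal{S},m)$ in place of $(\mathcal{R},d)$: strict positivity follows from $\mathcal{S}(w)=\mathcal{R}(y^*(w))$ with $\beta'y^*(w)=w\neq\mathbf{0}$ (the infimum being attained by Proposition~\ref{prop_unique_u}), and additivity-only-under-scaling follows from feasibility of $y^*(w_1)+y^*(w_2)$ for $w_1+w_2$, which forces the equality case of sub-additivity of $\mathcal{R}$ and hence $y^*(w_1)=c\,y^*(w_2)$, giving $w_1=c\,w_2$ after applying $\beta'$. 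The ray characterization of the solutions of the budget equations and the transfer to asset weights via $\beta'y=\beta'y^*(w^*)=w^*$ are likewise sound, and they constitute the argument the paper tacitly relies on when it asserts the last two claims of the theorem.

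One point, however, is resolved incorrectly in your final sentence. You rightly observe that the sufficiency direction needs $\sum_i y_i>0$: if $\sum_i y_i<0$, then $\beta'\theta(y)=w^*/\sum_i y_i$ has all coordinates strictly negative, so the constraint $w=\beta'\theta\in(\mathbb{R}_+^*)^m$ built into Definition~\ref{FRB_problem_def} fails and $\theta(y)$ is \emph{not} a solution of $\text{FRB}_b$; nor can one rescue it through the budget equations themselves, since positive homogeneity of $\mathcal{S}$ says nothing about $\mathcal{S}$ and $\nabla\mathcal{S}$ at negative multiples of $w^*$. But the hypothesis $\sum_i y_i\neq 0$ does not ``play that role'': nonzero is strictly weaker than positive, and unless $\mathbf{1}\in\text{Im}(\beta)$ the affine space $y^*(w^*)+\text{Im}(\beta)^\perp$ genuinely contains vectors with negative coordinate sum. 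So the implication, as stated both by you and by the paper, fails for such $y$; the clean fix is to replace $\sum_i y_i\neq 0$ by $\sum_i y_i>0$ in the sufficiency claim (equivalently, to record that solutions of $\text{FRB}_b$ are produced exactly by those $y$ in the affine space with positive sum). This is an imprecision your proof exposes in the theorem statement itself, and it deserves to be stated as such rather than absorbed into the hypothesis.
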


From Theorem \ref{thm_existence_FRM}, solving $\text{FRB}_{b}$ boils down to solving a  minimization problem of the form
$$\min_{w \in (\mathbb R_+^*)^m} g\big(\mathcal{S}(w)\big) - \sum_{i=1}^m b_i \log{w_i}.$$
However, as $g$ is increasing, we have 
\begin{align}
\label{frb_problem_general}
\min_{w \in (\mathbb R_+^*)^m} g\big(\mathcal{S}(w)\big) - \sum_{i=1}^m b_i \log{w_i} &=\min_{w \in (\mathbb R_+^*)^m} g\Bigg(\min_{y\in\mathbb{R}^d, \beta'y=w} \mathcal{R}(y)\Bigg) - \sum_{i=1}^m b_i \log{w_i} \nonumber \\ 
&= \min_{w \in (\mathbb R_+^*)^m} \min_{y\in\mathbb{R}^d, \beta'y=w} g(\mathcal{R}(y)) - \sum_{i=1}^m b_i \log{w_i}\nonumber \\
&=\min_{y \in \mathcal C} g(\mathcal{R}(y)) - \sum_{i=1}^m b_i \log{(\beta'y)_i}
\end{align}
where $\mathcal C = \big\{y\in\mathbb{R}^d | (\beta'y)_i>0, \forall i \in \{1,\ldots,m\}\big\}$.\\

In other words, the fact that $\mathcal S$ is itself defined as a minimum depending on $\mathcal R$ allows to solve $\text{FRB}_{b}$ through a minimization problem involving $\mathcal R$ only (not $\mathcal S$). Moreover, solving this minimization problem leads to a vector of asset exposures that is the best in terms of risk, given the associated factor exposures.

\medskip

Now, if $g\big(\mathcal{R}(y)\big) = \min_{\zeta \in \mathcal{Z}} \mathbb E\big[H(\zeta, -y'X) \big]$ for some set $\mathcal{Z}$ and some function $H$, the above problem writes
\begin{equation}
\label{frb_stochastic}
\min_{y \in \mathcal C} \min_{\zeta \in \mathcal{Z}} \mathbb E\big[H(\zeta, -y'X) \big] - \sum_{i=1}^m b_i \log{(\beta'y)_i}=  \min_{y \in \mathcal C, \zeta \in \mathcal Z} \mathbb E\Big[H(\zeta, -y'X) - \sum_{i=1}^m b_i \log{(\beta'y)_i}\Big].  
\end{equation}

If $(y^*,\zeta^*)$ is a solution of the above stochastic optimization problem, then, as soon as $\sum_{i=1}^d y^*_i \neq 0$,  $\theta := y^*/\sum_{i=1}^d y^*_i$ solves $\text{FRB}_b$.

\medskip

We clearly see that the computational methodologies that are applicable to the classical Risk Budgeting problem are equally suitable for the computation of Factor Risk Budgeting portfolios. Moreover, the same set of risk measures as for Risk Budgeting can be used for Factor Risk Budgeting. 

\subsection{The case of long-only portfolios}
\label{long_only_case}
It is essential to note that Factor Risk Budgeting portfolios presented above constitute fully invested general portfolios. In practice, short positions are not always permitted, and asset weights generally have to comply with certain constraints. Especially, it is often the case that portfolios are restricted to be long only. We therefore highlight two possible approaches for the case of long-only portfolios in an FRB setting.

\medskip
Solving Problem (\ref{frb_problem_general}) or (\ref{frb_stochastic}) results in an FRB portfolio $\theta \in \Delta_d$ and a vector of factor exposures~$w^*$ that satisfy the condition set out in the Definition~\ref{FRB_problem_def}. Since $\theta$ is not necessarily long only, one possible approach is to look for an alternative portfolio that respects the constraints and produces the same factor exposures $w^*$. Precisely, this can be done on the basis of our initial risk-based approach by solving  
$$\underset{y\in\mathbb{R}_+^d, \beta'y=w^*}\min \mathcal{R}(y)$$
and normalizing its solution. Such an approach allows us to search for an FRB portfolio satisfying our constraint. However, it requires the set $\big\{y\in\mathbb{R}_+^d | \beta'y=w^*\big\}$ to be non-empty, which is not necessarily true in all cases. 

\medskip

Another approach can therefore be to incorporate the long-only constraint directly in Problem~(\ref{frb_problem_general}) or (\ref{frb_stochastic}) and consider the set $\mathcal C^{\geq 0} = \big\{y\in\mathbb{R}_+^d | (\beta'y)_i>0, \forall i \in \{1,\ldots,m\}\big\}$ for the variable~$y$.\footnote{Even the set $\mathcal C^{\geq 0}$ is not always non-empty, meaning that there may not even exist a nonnegative vector of asset exposures leading to a positive vector of factor exposures. A simple approach to establish this by contradiction is to consider $\beta$ whose elements are negative. In such a scenario, no nonnegative vector $y$ can lead to a positive vector $w = \beta'y$. However, this constraint is more of a theoretical limitation than a practical one: in most practical situations, $\mathcal C^{\geq 0}$ is a non-empty set.} There, of course, still exists a unique minimizer that solves the aforementioned problem. Nevertheless, $\theta := y^*/\sum_{i=1}^d y^*_i \in \Delta^{\geq 0}_d$, now, may not guarantee a solution to $\text{FRB}_b$. On the other hand, we expect this to be a portfolio in which the factor risk contributions are close to the factor risk budgets. We consider this latter approach in the numerical section of this paper and analyse the proximity of factor risk contributions to factor risk budgets for long-only FRB portfolios.

\section{Balancing asset and factor risk contributions}
\label{asset_factor_setting}

\subsection{Asset-Factor Risk Budgeting portfolios}
\label{asset_factor_theory}
Although FRB portfolios may present a theoretical appeal, they can run into a number of practical pitfalls. The search for a precise match in factor risk budgets can lead to concentrated portfolios in asset weights, especially in the original long-short setting, amplifying individual asset risk contributions. Likewise, classical RB portfolios may contain hidden risks associated with the underlying risk factors since the sole focus is to satisfy desired risk contribution at the asset level. 

\medskip

Consequently, a framework benefiting from both ideas can be of practical use to construct portfolios which take asset and factor risk contributions into account together in the same setting. The fact that FRB portfolios can be found by solving an optimization problem (Problem (\ref{frb_problem_general}) or (\ref{frb_stochastic})) involving only $\mathcal{R}$ allows us to propose an optimization framework that seeks a balanced solution between FRB and classical RB portfolios. 

\medskip

More precisely, for given vectors representing asset risk budgets $b_a \in \Delta^{>0}_d$ and factor risk budgets $b_f \in \Delta^{>0}_m$, we can consider  the following optimization problem: 
\begin{equation}
\label{asset_factor_rb}
\min_{y \in \mathcal C^{> 0}} g(\mathcal{R}(y))  -  \lambda_a \sum_{i=1}^d {b_a}_i \log{y_i} - \lambda_f \sum_{i=1}^m {b_f}_i \log{(\beta'y)_i},     
\end{equation}
where $\mathcal C^{>0} = \big\{y\in(\mathbb R_+^*)^d | (\beta'y)_i>0, \forall i \in \{1,\ldots,m\}\big\}$ and $\lambda_a, \lambda_f \in \mathbb R_+^*$ are asset and factor importance parameters. 

\medskip

Of course, if $g\big(\mathcal{R}(y)\big) = \min_{\zeta \in \mathcal{Z}} \mathbb E\big[H(\zeta, -y'X) \big]$ for some set $\mathcal{Z}$ and some function $H$, the above problem writes
\begin{equation}
\label{asset_factor_rb_stochastic}
 \min_{y \in \mathcal C^{> 0}, \zeta \in \mathcal Z} \mathbb E\Big[H(\zeta, -y'X) -  \lambda_a \sum_{i=1}^d {b_a}_i \log{y_i} - \lambda_f \sum_{i=1}^m {b_f}_i \log{(\beta'y)_i}\Big].  
\end{equation}

If $(y^*,\zeta^*)$ is a solution of the above stochastic optimization problem, then, $\theta := y^*/\sum_{i=1}^d y^*_i$ is called the Asset-Factor Risk Budgeting (AFRB) portfolio.

\medskip
As a result, the AFRB portfolio is obtained by solving an optimization problem that involves minimizing a risk term subject to two logarithmic barriers. The risk measure and the barriers naturally prevent asset and factor exposures from diverging and approaching zero, ensuring that the resulting portfolios are balanced in terms of both exposure types and associated risks. By adjusting the importance parameters, one can introduce varying degrees of sensitivity to different types of risk contributions, thus enabling the introduction of tilts towards certain aspects.

\medskip
We should note that AFRB portfolios never exactly match the given risk budgets for assets and factors. However, 
this formulation allows us to explore a compromise solution that reconciles the advantages of FRB portfolios as well as penalizing large deviations from asset risk budgets.  

\subsection{Illustration with a simple model}
\label{asset_factor_toy_model}
Let us examine a simple model with 4 assets and 3 factors to gain insight into the proposed Asset-Factor Risk Budgeting framework and its distinctions from FRB and classical RB. For simplicity, we adopt the volatility risk measure and assume the following arbitrary parameters.\footnote{The parameters are taken from the example presented in \cite{roncalli2016risk}.}

\begin{center}
$\beta\! =\! \begin{bmatrix}
0.9 &  0.0 &  0.5 \\
1.1 &  0.5 &  0.0 \\
1.2 &  0.3 &  0.2 \\
0.8 &  0.1 &  0.7 \\
\end{bmatrix}$ and
$\Sigma\! =\! \begin{bmatrix}
0.0449 &  0.0396 &  0.0442 &  0.0323 \\
0.0396 &  0.0734 &  0.0543 &  0.0357 \\
0.0442 &  0.0543 &  0.0689 &  0.0401 \\
0.0323 &  0.0357 &  0.0401 &  0.0531 \\
\end{bmatrix}$.
\end{center}

With these specifications, we can compute the RB, FRB and AFRB portfolios through Problems~(\ref{rb_problem_vol}), (\ref{frb_problem_general}) and (\ref{asset_factor_rb}), respectively, based on the risk budgets $b_a =(0.25,0.25,0.25,0.25)$ for assets and $b_f = (1/3, 1/3, 1/3)$ for factors. For AFRB, we also require the importance parameters $(\lambda_a, \lambda_f)$, which we set to $(0.2, 0.8)$ in this simple example.

\medskip
Table~\ref{asset_based} illustrates the portfolios computed under the aforementioned setting. As expected, the RB portfolio evenly distributes its risk across the assets. In contrast, the FRB portfolio appears concentrated on two assets, both in terms of weights and risk contributions. In particular, it has the highest total risk across our three portfolios. We clearly see that the AFRB portfolio prevents excessive asset risk concentration within the portfolio.

\begin{center}
\begin{tabular}{c|ccc|ccc}
\toprule
 & \multicolumn{3}{c|}{$\theta_i$} & \multicolumn{3}{c}{$\theta_i \partial_i \mathcal{R}(\theta)$} \\
{Asset} &           RB &     FRB &    AFRB &                    RB &    FRB &   AFRB \\
\midrule
 1    &        27.86 &   -6.60 &   18.26 &                  5.28 &  -1.05 &   3.33 \\
 2    &        22.60 &   34.95 &   25.72 &                  5.28 &   7.93 &   6.00 \\
 3    &        21.98 &    8.87 &   17.97 &                  5.28 &   1.90 &   4.22 \\
 4    &        27.56 &   62.78 &   38.05 &                  5.28 &  13.38 &   7.63 \\
\midrule
$\sum_{1 \le i \le 4}$  &       100.00 &  100.00 &  100.00 &                 21.13 &  22.16 &  21.18 \\
\bottomrule
\end{tabular}
\captionof{table}{Asset weights and associated asset risk contributions (in \%) of the portfolios computed using the different methods.}
\label{asset_based}
\end{center}

To conduct a similar analysis from the perspective of the underlying risk factors, we refer to Table~\ref{factor_based}. Firstly, for the three portfolios, we observe significant differences in risk allocation across factors, although they have similar factor exposures. Factor risk contributions in the FRB portfolio are perfectly in line with the chosen risk budgets, potentially leading to the previously observed unbalanced asset weights. Conversely, the RB portfolio, lacking sensitivity to factors by definition, is highly exposed to the risk of the first factor. The AFRB portfolio, designed with sensitivity to both assets and factors, mitigates that effect with only a negligible increase in the total portfolio risk.

\begin{center}
\begin{tabular}{c|ccc|ccc}
\toprule
{} & \multicolumn{3}{c|}{$w_i$} & \multicolumn{3}{c}{$w_i \partial_i \mathcal{S}(w)$} \\
{Factor} &           RB &     FRB &    AFRB &                    RB &    FRB &   AFRB \\
\midrule
 1   &        98.36 &   93.37 &   96.73 &                 16.64 &   7.39 &  13.87 \\
 2   &        20.65 &   26.42 &   22.06 &                  1.80 &   7.39 &   3.22 \\
 3   &        37.62 &   42.42 &   39.36 &                  2.66 &   7.39 &   4.08 \\
\midrule
$\sum_{1 \le i \le 3}$ &       156.63 &  162.21 &  158.15 &                 21.11 &  22.16 &  21.17 \\
\bottomrule
\end{tabular}
\captionof{table}{Factor exposures and associated factor risk contributions (in \%) of the portfolios computed using the different methods.}
\label{factor_based}
\end{center}

\subsection{Choosing the importance parameters}

The selection of the importance parameters is a critical aspect. Ultimately, Asset-Factor Risk Budgeting aims to identify portfolios whose normalized risk contributions closely align with a reference vector of risk budgets. Hence, it is essential to assess how far we deviate from the risk budgets in terms of risk contributions for given values of $(\lambda_a, \lambda_f)$.

\medskip

Relative entropy serves as a useful tool for conducting such analyses since risk budgets are strictly positive and sum up to 1, making them a suitable reference distribution for risk contributions.\footnote{We assume nonnegative risk contributions, which is typically the case for long-only portfolios in assets and factors. In the event of negative risk contributions, one can utilize a norm as a distance measure to conduct a similar analysis.} Specifically, we can compute an asset-based relative entropy score $\sum_{i=1}^d q_i \log{(q_i/b_{a_i})}$ and a factor-based relative entropy score $\sum_{i=1}^m s_i \log{(s_i/b_{f_i})}$, respectively, where $q_i = \theta_i\partial_i\mathcal{R}(\theta)/\mathcal{R}(\theta)$ and $s_i = w_i\partial_i\mathcal{S}(w)/\mathcal{S}(w)$. These scores provide insights into the degree of deviation from the target risk contributions and can be helpful to determine the importance parameters. By selecting a pair where both relative entropy scores are low, one can identify the importance parameters that result in a small amount of deviation from the desired risk allocations.

\begin{figure}[!ht]
\centering
     \includegraphics[width=1\textwidth]{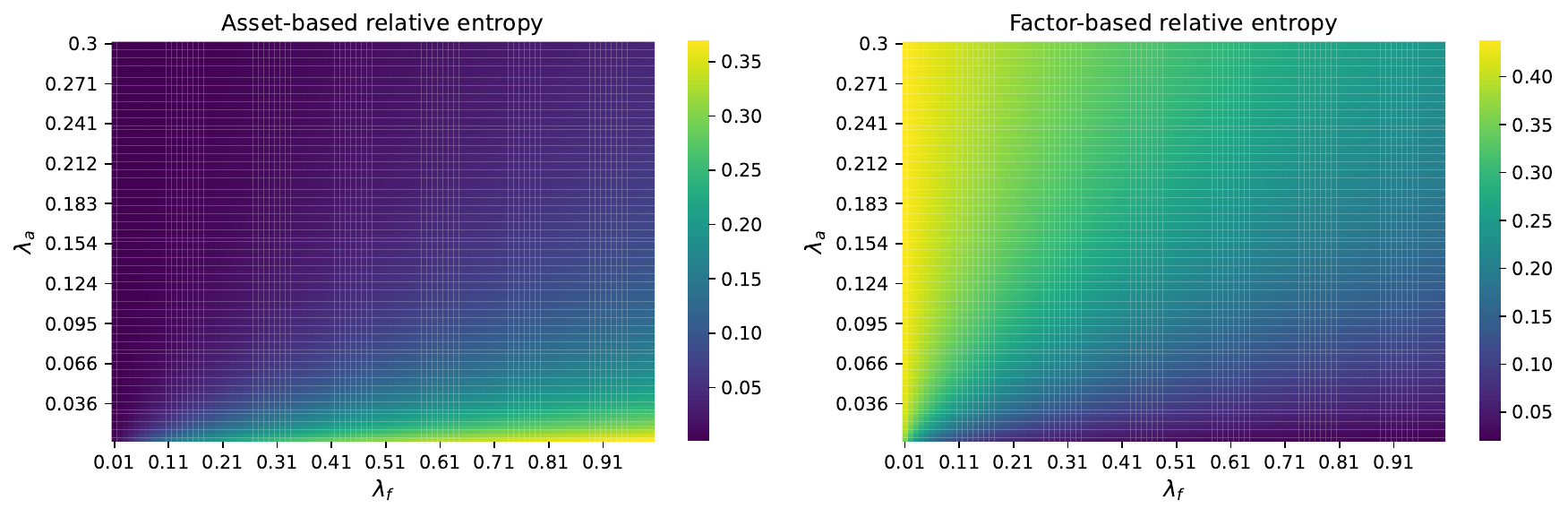}
      \caption{The impact of the choice of $(\lambda_a,\lambda_f)$ on the distance of normalized risk contributions from specified risk budgets at the asset level (left) and factor level (right).}
      \label{chart_diver}
\end{figure}
\vspace{0mm}

Figure~\ref{chart_diver} illustrates the relative entropy scores of risk contributions for the AFRB portfolios computed across a range of different parameter values in the above model. As expected, as $\lambda_a$ increases, normalized asset risk contributions tend to converge towards $b_a$, evidenced by the asset-based relative entropy score approaching zero. However, this convergence is associated with an increase in factor-based relative entropy, indicating a shift in factor risk contributions. Conversely, to achieve normalized factor risk contributions that closely match $b_f$, one should choose $\lambda_f$ to be greater than $\lambda_a$. Therefore, it is crucial to ensure that the selected parameters maintain a good balance between both relative entropy scores. Additionally, one can adopt a systematic approach to conduct a grid search for a point that minimizes a combined score, e.g. the mean of asset-based and factor-based relative entropy scores.

\section{Numerical applications with financial data}
\label{section6}
\subsection{Analysis of risk-based portfolios in an equity factor model}
In this section, we construct various risk-based portfolios using the BARRA US Equity (USE3) model.\footnote{A detailed description of the model can be found online or requested through \url{https://www.msci.com/www/research-report/united-states-equity-model-use3/015920189}.} The USE3 is a multi-factor model covering 67 factors: 54 industry factors and 13~style factors (e.g. momentum, growth and size). The covariance matrix of the factor returns is provided, revealing the estimated relationship between the factors. The model covers a wide range of US equities, and provides the factor loading of each stock on each risk factor, as well as the idiosyncratic components for individual assets.\footnote{It is worth noting that they implement a multi-industry approach, which means that stocks can have loadings on multiple industry factors. This approach enhances the model.} It therefore corresponds to having a factor model of the following form: 
$$\Sigma = \beta \Sigma_{f} \beta' + \Sigma_{idio},$$
where $\Sigma_f$ is the covariance matrix of factor returns and $\Sigma_{idio}$ is a diagonal matrix containing idiosyncratic variances of asset returns. 

\medskip

These parameters are sufficient to compute portfolios using risk-based methods when the risk measure is chosen to be volatility. It is indeed interesting to use the covariance matrix from a specialized proprietary model to construct portfolios for any risk-based method, but it is particularly interesting in the case of FRB and its variants since we also have the matrix of factor loadings.   

\medskip

We consider the stocks in the S\&P500 index as of 31/10/2023 as our investment universe.\footnote{The index actually has 503 components, as three of them have two share classes listed. We filter them and end up with a universe of 500 assets.} We also use the parameters of the BARRA model for the same date. We calculate two Minimum Variance portfolios, one without constraint and another one with a long-only constraint, denoted by $\text{MV}$ and $\text{MV}^+$ respectively. We also compute an RB portfolio (solving Problem~(\ref{rb_problem_vol})) where the risk budgets for each asset are equal -- called the $\text{ERC}$ portfolio. To introduce factor risk sensitivity into the portfolio construction process, we consider three different portfolios. The first is an FRB portfolio, in which factor risk budgets are chosen to be equal and computed by solving Problem~(\ref{frb_problem_general}). We call it the Equal Factor Risk Contribution portfolio, $\text{EFRC}$. The second portfolio is similar, but we restrict the weights to be positive following the ideas presented in Section~\ref{long_only_case}. In other words, it is the portfolio obtained by solving Problem~(\ref{frb_problem_general}) under the constraint associated with the set $\mathcal C^{\geq 0}$. We denote it by $\text{EFRC}^+$. We also consider the Equal Asset-Factor Risk Contribution portfolio, $\text{EAFRC}$ which aims to find a balance between factor risk contributions and asset risk contributions as formalized by Problem~(\ref{asset_factor_rb}) where we use equal asset and factor risk budgets and the following importance parameters: $\lambda_a = 0.3$ and $\lambda_f = 0.7$. Finally, we also include the equally-weighted portfolio, $\text{EW}$, in our analysis which is often a competitive challenger of risk-based methods.        

\begin{figure}[!ht]
\centering
     \includegraphics[width=.9\textwidth]{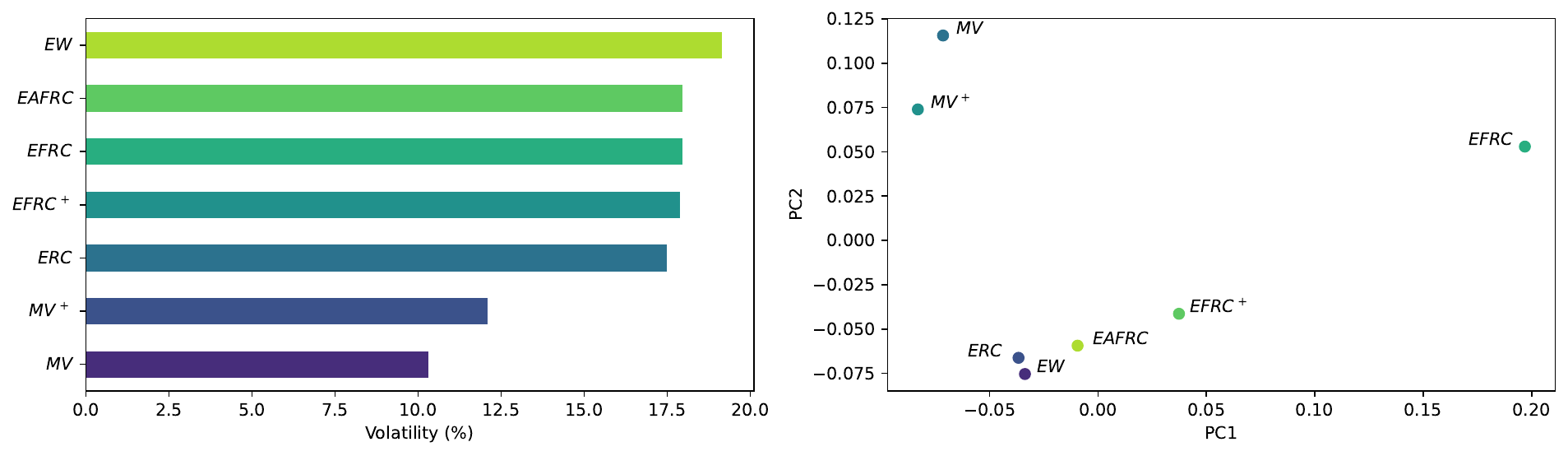}
      \caption{Annualized ex-ante volatilities (left) and latent space representations (right) of the portfolios.}
       \label{chart1}
\end{figure}
\vspace{0mm}

Figure~\ref{chart1} gives an initial idea of how the different portfolios produced by the above methods compare. We observe that the portfolios that are based on risk contributions have similar ex-ante volatilities, while they appear very different in terms of positioning in the graph on the right obtained by applying a PCA on the asset weights to locate the portfolios in a two-dimensional space. $\text{EFRC}$ seems to deviate from the other portfolios in order to perfectly match its factor risk budgets. However, adding long-only constraint ($\text{EFRC}^+$) and introducing asset risk budgets into the picture ($\text{EAFRC}$) shifts the resulting portfolios towards $\text{ERC}$ and $\text{EW}$ in terms of positioning. Unsurprisingly, Minimum Variance portfolios, $\text{MV}$ and $\text{MV}^+$ present the lowest ex-ante risks and form a cluster far away from the other portfolios in the latent space. 

\medskip
Furthermore, to better understand portfolios from the point of view of diversification both in terms of weights and risk contributions, we can plot the Lorenz curves which represent the cumulative sum of the sorted elements of the different vectors. 

\medskip

In Figure~\ref{LorenzAsset}, we plot Lorenz curves based on the vector of asset weights and the vector of asset risk contributions for portfolios of interest. In the graph on the left, the curve for $\text{EW}$ lies on the \textit{equality} line and can be regarded as the \textit{least-concentrated} portfolio in terms of asset weights, since the capital is evenly distributed across assets. We observe that $\text{ERC}$ and $\text{EAFRC}$ portfolios are not very far from this line, indicating that the asset weights do not significantly depart from each other for these portfolios. On the contrary, $\text{MV}^+$ and $\text{EFRC}^+$ produce portfolios where the majority of assets have weights close to zero, resulting in  more concentrated portfolios. Assessing diversification for long-short portfolios is not straightforward, however, and we observe that the proportions of long and short positions are similar for the $\text{MV}$ and $\text{EFRC}$ portfolios, resulting in $U$-shaped Lorenz curves. In the graph on the right, we see a very similar picture, except that $\text{ERC}$ is obviously now the least-concentrated portfolio in terms of asset risk contributions, spreading portfolio risk evenly across assets. $\text{EAFRC}$ is close to both  $\text{EW}$ and  $\text{ERC}$ on both graphs. 

\begin{figure}[!ht]
\centering
     \includegraphics[width=\textwidth]{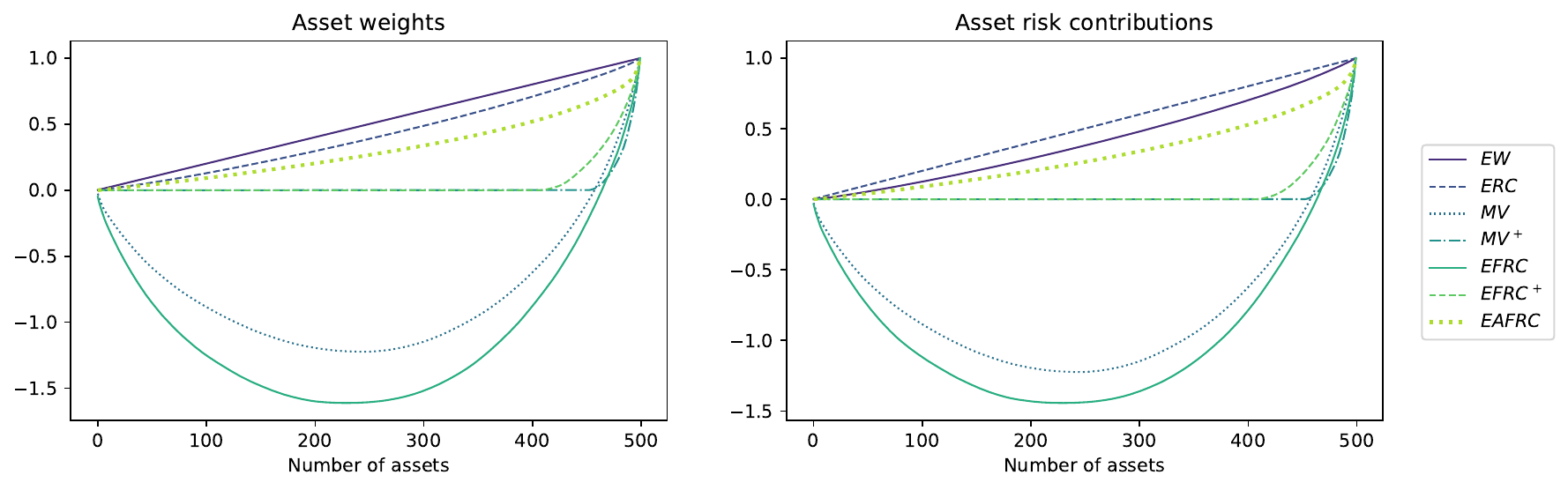}
      \caption{Lorenz curves of asset weights (left) and asset risk contributions (right) (in \%) for different portfolios}
       \label{LorenzAsset}
\end{figure}
\vspace{0mm}

The results of the same analysis on factor risk contributions (using Proposition~\ref{corolarry1}) are depicted in Figure~\ref{LorenzFactor}. This time, $\text{EFRC}$ plays the benchmark role as the least-concentrated portfolio in terms of factor risk contributions, distributing risk across factors equally. $\text{EFRC}^+$ and $\text{EAFRC}$ follow this portfolio, providing factor risk distributions that are close to factor risk budgets. $\text{EW}$, $\text{ERC}$, $\text{MV}^+$ and $\text{MV}$ are portfolios that do not allow factor risk contributions to be controlled by definition, and can result in portfolios where risk is concentrated on a small set of factors.

\begin{figure}[!ht]
\centering
     \includegraphics[width=.6\textwidth]{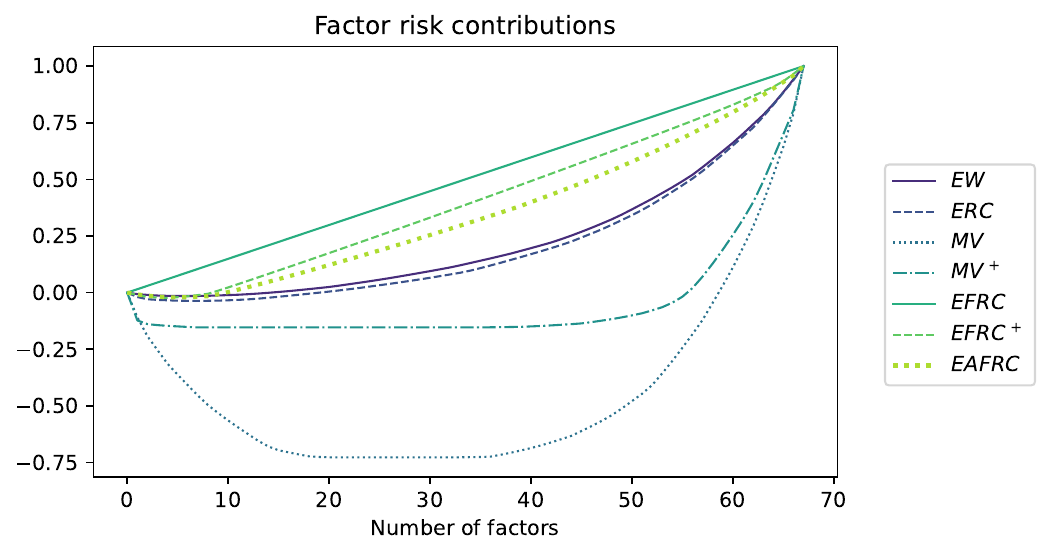}
      \caption{Lorenz curves of factor risk contributions (in \%) for different portfolios}
       \label{LorenzFactor}
\end{figure}
\vspace{0mm} 

\medskip

In short, we see that the $\text{EW}$ and $\text{ERC}$ portfolios are well-balanced in terms of asset weights and asset risk contributions, while the Lorenz curves for factor risk contributions indicate a potential concentration of factor risk. The $\text{EFRC}$ portfolio achieves perfectly an equal allocation of factor risk, but may suffer from a concentration of asset weights and asset risk contributions, as does the $\text{EFRC}^+$ portfolio. However, the $\text{EAFRC}$ portfolio is, as expected, interesting for balancing asset and factor risk contributions. It seems to be the only strategy in our case that achieves a balance across all three attributes. Consequently, Asset-Factor Risk Budgeting portfolios may be strong candidates for managing the relationship between the three different pillars of portfolio risk management. 

\subsection{Constructing diversified cross-asset portfolios}
In this section, we illustrate a case where we construct cross-asset portfolios within the framework of a macroeconomic factor model. Different asset classes can be affected by common macroeconomic factors, and it can therefore be interesting to manage the risk exposure to these factors when managing a cross-asset portfolio. 

\medskip
This often requires portfolio managers to first define the set of factors of interest to their investment universe. Then, as most macroeconomic factors are only observable at very low frequency (quarterly or annually), they should find proxy portfolios that approximate the returns / movements of these factors, so as to be able to perform a sensitivity analysis and find the factor loadings of the assets. They can then optimize portfolios or perform a risk analysis for a given portfolio based on a chosen risk measure. 

\medskip

We consider an investment universe of 15 assets\footnote{The corresponding tickers are respectively TY1 Comdty, CN1 Comdty, RX1 Comdty, G 1 Comdty,	ES1 Index,	RTY1 Index,	VG1 Index,	SXE1 Index, Z 1 Index, NH1 Index, MES1 Index, CRUD LN Equity,	IGLN LN Equity and COPA LN Equity.} corresponding to 4 government bond indices (U.S. 10-Year Government Bond Index, Canada 10-Year Government Bond Index, Germany 10-Year Government Bond Index, U.K. 10-Year Government Bond Index), 7 equity indices (U.S. Large-Cap Equity Index, U.S. Small-Cap Equity Index, Eurozone Large-Cap Equity Index, Eurozone Small-Cap Equity Index, U.K. Equity Index, Japanese Equity Index, Emerging Markets Equity Index), and 3 commodities (Crude Oil, Gold, Copper). We focus on 4 macroeconomic factors of interest called Growth, Real Rates, Broad Commodities and Emerging Markets. These factors are respectively represented by the S\&P Developed BMI Index, the S\&P Global Developed Sovereign Inflation-Linked Bond Index, the S\&P GSCI Index, and an index long emerging markets equity (S\&P Emerging BMI Index), short developed markets equity (S\&P Developed BMI Index) and long emerging market currencies (MSCI Intl. Emerging Market Currency). For the latter, equity and FX indices are used in the same factor to better capture the overall risk of emerging markets.\footnote{The S\&P indices can be found on S\&P Dow Jones Indices public website: \url{https://www.spglobal.com/spdji/en/}.} The choice of risk factors and their proxies follows the model in \cite{greenberg2016factors}.

\medskip
Our objective is to investigate three strategies, namely Risk Budgeting, long-only Factor Risk Budgeting and Asset-Factor Risk Budgeting, where Expected Shortfall is chosen as the risk measure. More specifically, we want to analyse the performance of these strategies, but especially the evolution of asset and factor risk contributions over time by performing a backtest where we rebalance our portfolio on a weekly basis to obtain the desired portfolios based on the new information. 

\medskip
Our database consists of daily returns between January 2015 and November 2023. We consider a 5-year look-back window, which implies that at each rebalancing, we take into account data from the last 5 years. Consequently, the first portfolio is calculated in January 2020. At each rebalancing, we run an OLS regression using the weekly returns of each asset and the factors to obtain the matrix of factor loadings.\footnote{If the associated relationship between assets and factors is not statistically significant ($p$-value greater than~$0.05$), we consider the associated factor loading to be equal to $0$ for the given period.}  

\medskip
First, we look at the case of the RB strategy. The risk budgets are chosen as $$b_a=\left(0.1,0.1,0.1,0.1,\frac{0.4}{7},\frac{0.4}{7},\frac{0.4}{7},\frac{0.4}{7},\frac{0.4}{7},\frac{0.4}{7},\frac{0.4}{7},\frac{0.2}{3},\frac{0.2}{3},\frac{0.2}{3}\right).$$
The choice of risk budgets depends on the portfolio manager. In our case, we assign 40\% of risk budgets to bonds and equities, and 20\% to commodities. We then allocate these budgets equally within each class. The strategy corresponds to finding the RB portfolio for Expected Shortfall with a threshold $\alpha= 95\%$ by solving Problem~(\ref{stocRB_ES}) using daily returns in the look-back window and rebalancing the portfolio on the last business day of each week.

\begin{figure}[!ht]
\centering
     \includegraphics[width=1\textwidth]{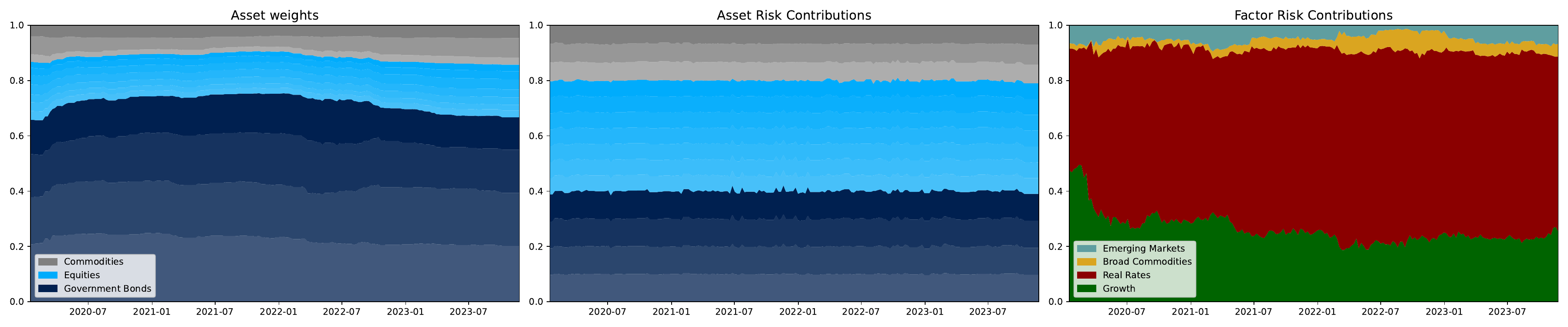}
      \caption{Evolution of asset weights and asset / factor risk contributions (in \%) for the Risk Budgeting strategy}
       \label{rb_contribs}
\end{figure}
\vspace{0mm} 

In Figure~\ref{rb_contribs}, we observe that the asset weights are stable over time and that the risk contributions of the assets are very well aligned with the given vector of risk budgets, with the exception of small deviations which are linked to the accuracy of the Expected Shortfall estimates at a relatively high threshold $\alpha$ on a sample with a limited number of points. However, when we examine the factor risk contributions, which the RB portfolio does not take into account when optimizing portfolios, we observe that the risk contributions of Real Rates can reach levels of the order of 70\%, which should preferably be avoided. On the contrary, the risk contributions from Broad Commodities are often marginal, despite the fact that 20\% of risk budgets have been allocated to the commodity asset class through asset risk budgets.  

\medskip

To address this issue, we can consider the long-only FRB strategy that allows us to control factor risk contributions. We choose the following factor risk budgets for Growth, Real Rates, Broad Commodities and Emerging Markets respectively:  $$ b_f = (0.35, 0.35, 0.20, 0.10).$$

\medskip

The long-only FRB strategy is expected to allocate risk across the factors very closely to the selected factor risk budgets (although not perfectly in line, as weights are constrained to be positive) keeping factor risk contributions at desired levels. As shown in Figure~\ref{frb_contribs}, this expectation is validated, with factor risk contributions of the portfolio consistently aligning with the chosen factor risk budgets. However, in this scenario, we lose control over asset risk contributions, resulting in highly fluctuating asset weights. This outcome is undesirable for an investment strategy as it leads to high turnover and consequently increased costs.

\begin{figure}[!ht]
\centering
     \includegraphics[width=\textwidth]{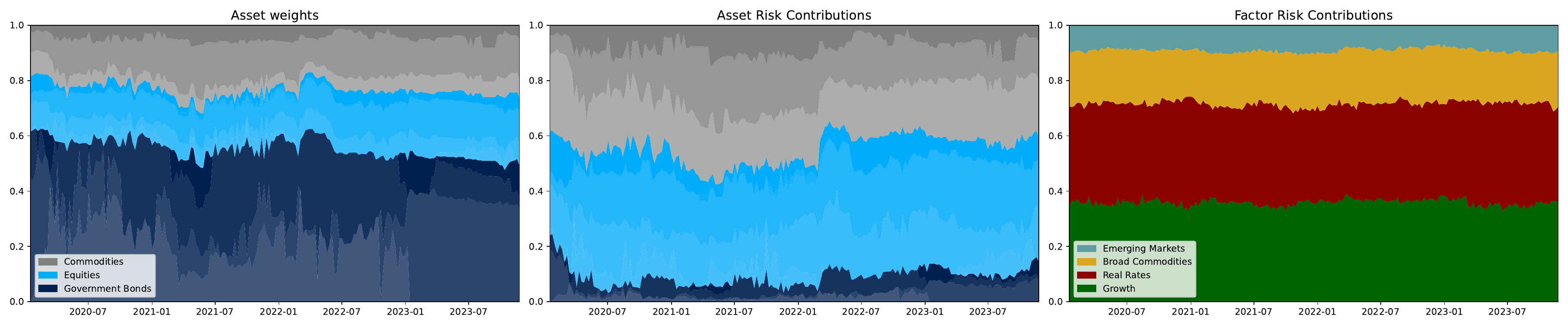}
      \caption{Evolution of asset weights and asset / factor risk contributions (in \%) for the long-only Factor Risk Budgeting strategy.}
       \label{frb_contribs}
\end{figure}
\vspace{3mm} 

The Asset-Factor Risk Budgeting strategy can bring solutions to the issues oberserved with the two above strategies. We consider the same asset and factor risk budgets, $b_a$ and $b_f$. We also need to set the asset and factor importance parameters, which we choose equal: $\lambda_a=0.5$ and $\lambda_f=0.5$. We can now compute the AFRB portfolio by solving Problem~(\ref{asset_factor_rb_stochastic}) for Expected Shortfall with a threshold $\alpha= 95\%$ on a weekly basis and implement the AFRB strategy.

\begin{figure}[!ht]
\centering
     \includegraphics[width=\textwidth]{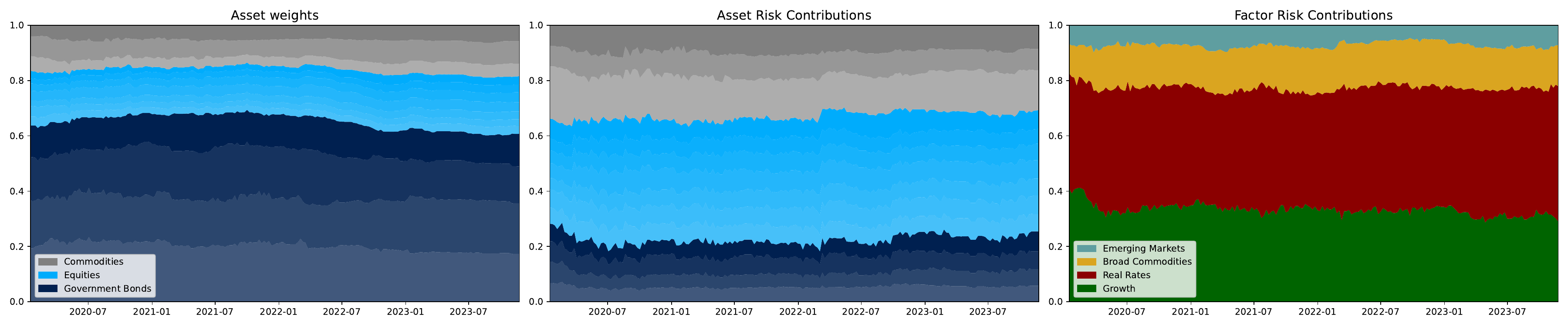}
      \caption{Evolution of asset weights and asset / factor risk contributions (in \%) for the Asset-Factor Risk Budgeting strategy.}
       \label{afrb_contribs}
\end{figure}
\vspace{3mm}

In Figure~\ref{afrb_contribs}, we observe that the AFRB strategy maintains a balance between asset and factor risk contributions. Compared with the case of RB, we observe a deviation from asset risk budgets. However, this is the price we have to pay to stay close to factor risk budgets. We note that the risk contribution of Broad Commodities is no longer small as it was with the RB strategy, and that we achieve a significant exposure to this factor while preserving the stability of asset weights contrary to the case of the long-only FRB. Growth and Real Rates now have balanced contributions, in line with factor risk budgets. Thanks to the importance parameters, we can choose to be more conservative in one of the two types of risk contributions, depending on the investment objective or market conditions. In short, the AFRB strategy can provide a flexible framework for better risk management of investment portfolios. 

\medskip

\begin{figure}[!ht]
\centering
     \includegraphics[width=.67\textwidth]{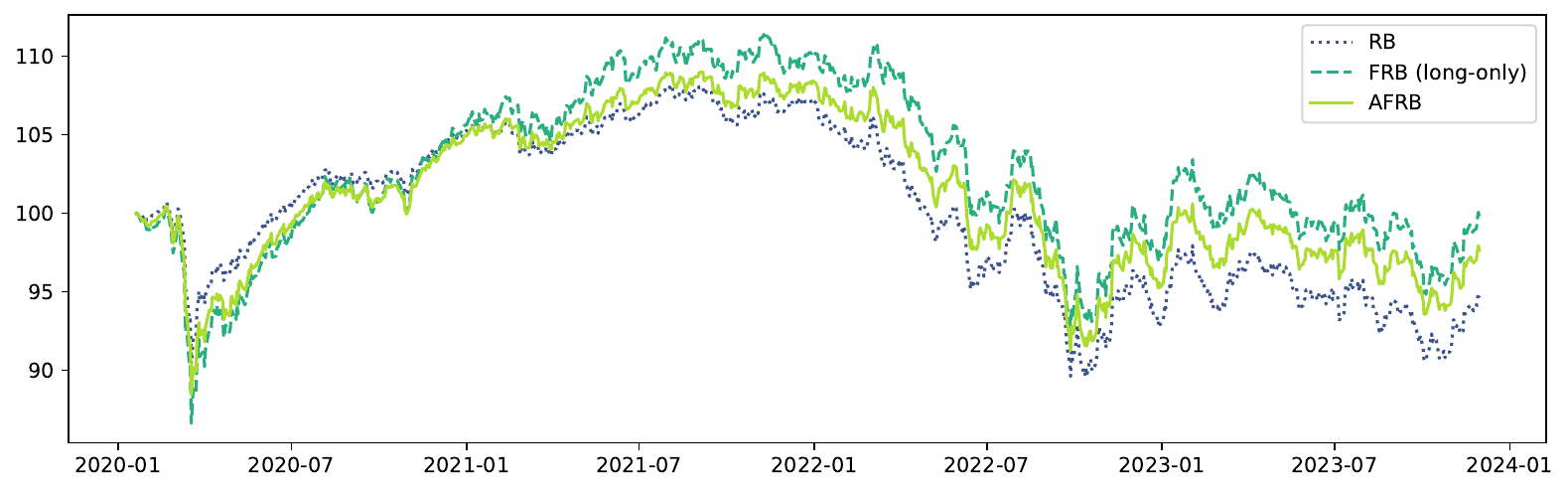}
      \caption{Cumulative performances of different strategies}
       \label{cum_perfs}
\end{figure}
\vspace{0mm} 

Although the primary objective of these strategies is diversification  rather than return maximization, it is still important to understand their behavior and performance beyond pure risk allocation figures. However, due to the limited length of our dataset, we can only analyse the strategies over a period of nearly four years which may not be sufficient to draw extensive conclusions, all the more given the major events over the period: the Covid-19 outbreak, the major increases in interest rates by central banks, periods of intense geopolitical tensions leading to war, etc. 

\begin{center}
\begin{tabular}{lrrrrr}
\toprule
{} &  $\hat{\mu}$ &  $\hat{\sigma}$ &  $\hat{\text{ES}}_{0.95}$ &  MaxDD &  Turnover \\
\midrule
RB              &        -1.42 &        6.55 &                0.93 &          17.12 &           1.00 \\
FRB (long-only) &        -0.07 &        7.99 &                1.18 &          16.71 &          17.25 \\
AFRB            &        -0.62 &        7.04 &                1.01 &          16.26 &           2.43 \\
\bottomrule
\end{tabular}
\captionof{table}{Sample mean (annualized), volatility (annualized), Expected Shortfall (daily), maximum drawdown and average turnover of the strategies (in \%)}
\label{stats}
\end{center}

Table~\ref{stats} demonstrates the main statistics associated with the different strategies computed on daily returns after transaction costs.\footnote{We account for transaction costs by factoring in bid/ask spreads of 0.02\% for equities, 0.01\% for bonds, and 0.05\% for commodities.} We observe negative annualized returns from the three strategies, primarily due to their high exposure to government bonds asset class and real rates factor, which performed poorly after the end of 2021 due to the increases in interest rates. However, the long-only FRB and AFRB strategies mitigate these losses by avoiding high exposures to this factor and also by providing significant exposure to the commodity factor, a factor that performed well after the Covid-19 outbreak. In terms of management costs, the long-only FRB strategy seems inefficient, despite the fact that it is the best performer, due to its high turnover, necessitating significant changes in positions every week and resulting in high transaction costs, which, in total, amounted to 0.31\% in our backtest (with 0.02\% for the RB and 0.05\% for the AFRB).\footnote{It is important to note that the results
may vary under other transaction cost schemes or rebalancing frequencies.} Conversely, the AFRB strategy is a low turnover strategy similar to the RB strategy limiting the implementation cost as well as allowing for a control on exposures to the desired factors.

\section{Conclusion}
In this paper, we address the question of incorporating factor risk contributions into the portfolio optimization process. First, we recall the concept of Risk Budgeting, which takes into account the risk contributions of individual assets. Next, we develop the necessary tools and propose a factor risk measure that allows us to apply these ideas to underlying risk factors in the presence of a linear factor model. We then formally define Factor Risk Budgeting and an associated stochastic optimization problem for finding FRB portfolios for a large set of risk measures. Finally, we propose a framework that combines RB and FRB to obtain portfolios that control the risk induced by both assets and factors. We conclude the paper with two case studies where we construct portfolios based on our ideas using equity and macroeconomic factor models.

\section*{Acknowledgment}
The authors would like to thank the participants of QuantMinds International 2023 held in London, the 68th Euro Working Group for Commodity and Financial Modelling (EWGCFM) conference held at Khalifa University in Abu Dhabi, and the XXV Workshop on Quantitative Finance held in Bologna for their valuable contributions to the discussions that occurred during these events. The quantitative teams at Amundi should also be thanked for their support. We would also like to express our gratitude to the two anonymous referees whose comments helped us to improve our article. 

\section*{Data Availability Statement}
The data that support the findings of this study are available from the corresponding author upon reasonable request.

\bibliographystyle{plain} 
\newpage

\end{document}